\newcommand*\samethanks[1][\value{footnote}]{\footnotemark[#1]}
\theoremstyle{definition}
\colorlet{shadecolor}{gray!20}
\newtheorem{theorem}{Theorem}
\newtheorem{corollary}[theorem]{Corollary}
\newtheorem{lemma}[theorem]{Lemma}
\newtheorem{definition}[theorem]{Definition}
\newcommand{\pcstgw}{\textsc{PCSTGW}}
\newcommand{\steinertree}{\textsc{SteinerTree}}
\newcommand{\rpcst}{\textsc{IPCST}}
\newcommand{\Root}{\textit{root}}
\newcommand{\RSET}{\{\Root\}}
\newcommand{\GW}{\text{GW}}
\newcommand{\ST}{\text{ST}}
\newcommand{\IT}{\text{IT}}
\newcommand{\apx}{1.7994}
\newcommand{\alf}{\alpha}
\newcommand{\bta}{\beta}
\newcommand{\p}{p}
\newcommand{\pv}{\pi_v}
\newcommand{\pb}{\pi_\bta}
\newcommand{\I}{\ensuremath{I}}
\newcommand{\Ib}{\ensuremath{\I_\bta}}
\newcommand{\R}{\ensuremath{R}}
\newcommand{\G}{\ensuremath{G}}
\newcommand{\V}{\ensuremath{V}}
\newcommand{\E}{\ensuremath{E}}
\newcommand{\cc}{\ensuremath{c}}
\newcommand{\A}{\ensuremath{\mathcal{A}}}
\newcommand{\B}{\ensuremath{\mathcal{B}}}
\newcommand{\C}{\ensuremath{\mathcal{C}}}
\newcommand{\D}{\ensuremath{\mathcal{D}}}
\newcommand{\Dp}{\ensuremath{\mathcal{D'}}}
\newcommand{\Dz}{\ensuremath{\mathcal{D''}}}
\newcommand{\Bp}{\ensuremath{\mathcal{B'}}}
\newcommand{\Bz}{\ensuremath{\mathcal{B''}}}
\newcommand{\rr}{r}
\newcommand{\va}{\rr_{\A}}
\newcommand{\vb}{\rr_{\B}}
\newcommand{\vba}{\ensuremath{b_1}}
\newcommand{\vbb}{\ensuremath{b_2}}
\newcommand{\vc}{\rr_{\C}}
\newcommand{\vd}{\rr_{\D}}
\newcommand{\vdp}{\rr_{\Dp}}
\newcommand{\vdz}{\rr_{\Dz}}
\newcommand{\vbp}{\rr_{\Bp}}
\newcommand{\vbz}{\rr_{\Bz}}
\newcommand{\CC}{\ensuremath{C}}
\newcommand{\AS}{\ensuremath{ActS}}
\newcommand{\SET}{\ensuremath{S}}
\newcommand{\dead}{\DGW}
\newcommand{\ah}{\ensuremath{H_A}}
\newcommand{\ih}{\ensuremath{H_I}}
\newcommand{\comp}{\ensuremath{comp}}
\newcommand{\DS}{\ensuremath{DS}}
\newcommand{\U}{\ensuremath{U}}
\newcommand{\OPT}{\text{OPT}}
\newcommand{\OPTp}{\ensuremath{\OPT'}}
\newcommand{\T}{\ensuremath{T}}
\newcommand{\F}{\ensuremath{F}}
\newcommand{\cost}{\textit{cost}}
\newcommand{\y}{\ensuremath{y}}
\newcommand{\ys}{\ensuremath{\y_S}}
\newcommand{\ysv}{\ensuremath{\y_{Sv}}}
\newcommand{\yv}{\ensuremath{\y_v}}
\newcommand{\TGW}{\ensuremath{\T_{\GW}}}
\newcommand{\TST}{\ensuremath{\T_{\ST}}}
\newcommand{\TIT}{\ensuremath{\T_{\IT}}}
\newcommand{\TOPT}{\ensuremath{\T_{\OPT}}}
\newcommand{\NVGW}{\ensuremath{\overline{\V(\TGW)}}}
\newcommand{\NVST}{\ensuremath{\overline{\V(\TST)}}}
\newcommand{\NVIT}{\ensuremath{\overline{\V(\TIT)}}}
\newcommand{\NVOPT}{\ensuremath{\overline{\V(\TOPT)}}}
\newcommand{\cGW}{\ensuremath{\cost_{\GW}}}
\newcommand{\cST}{\ensuremath{\cost_{\ST}}}
\newcommand{\cIT}{\ensuremath{\cost_{\IT}}}
\newcommand{\cOPT}{\ensuremath{\cost_{\OPT}}}
\newcommand{\cOPTR}{\ensuremath{\cost_{\OPT_{\R}}}}
\newcommand{\DGW}{\ensuremath{K}}
\newcommand{\NDGW}{\ensuremath{L}}
\newcommand{\wx}{\ensuremath{w_{\GW}}}
\newcommand{\wy}{\ensuremath{w_{\ST}}}
\newcommand{\wz}{\ensuremath{w_{\IT}}}
\newcommand{\cWAG}{\ensuremath{\cost_{\text{WAG}}}}
\newcommand{\dopt}{d_{\OPT}}
\title{Prize-Collecting Steiner Tree: A 1.79 Approximation}
\date{}
\author{
Ali Ahmadi\thanks{University of Maryland.}\\
\texttt{ahmadia@umd.edu}
\and
Iman Gholami\samethanks\\
\texttt{igholami@umd.edu}
\and
MohammadTaghi Hajiaghayi\samethanks\\
\texttt{hajiagha@umd.edu}
\and 
Peyman Jabbarzade\samethanks\\
\texttt{peymanj@umd.edu}
\and
Mohammad Mahdavi\samethanks\\
\texttt{mahdavi@umd.edu}
}
\begin{document}
\maketitle


\begin{abstract}
Prize-Collecting Steiner Tree (PCST) is a generalization of the Steiner Tree problem, a fundamental problem in computer science. 
In the classic Steiner Tree problem, we aim to connect a set of vertices known as terminals using the minimum-weight tree in a given weighted graph. 
In this generalized version, each vertex has a penalty, and there is flexibility to decide whether to connect each vertex or pay its associated penalty, making the problem more realistic and practical.

Both the Steiner Tree problem and its Prize-Collecting version had long-standing $2$-approximation algorithms, matching the integrality gap of the natural LP formulations for both. 
This barrier for both problems has been surpassed, with algorithms achieving approximation factors below $2$. 
While research on the Steiner Tree problem has led to a series of reductions in the approximation ratio below $2$, culminating in a $\ln(4)+\epsilon$ approximation by Byrka, Grandoni, Rothvo{\ss}, and Sanità \cite{DBLP:conf/stoc/ByrkaGRS10}, the Prize-Collecting version has not seen improvements in the past 15 years since the work of Archer, Bateni, Hajiaghayi, and Karloff \cite{DBLP:journals/siamcomp/ArcherBHK11,DBLP:conf/focs/ArcherBHK09} (FOCS'09), which reduced the approximation factor for this problem from $2$ to $1.9672$.
Interestingly, even the Prize-Collecting TSP approximation, which was first improved below $2$ in the same paper, has seen several advancements since then (see, e.g., Blauth and N{\"{a}}gele~\cite{DBLP:conf/stoc/BlauthN23} in STOC'23).

In this paper, we reduce the approximation factor for the PCST problem substantially to 1.7994 via a novel iterative approach.
\end{abstract}


\section{Introduction}



The Steiner Tree problem is a well-known problem in the field of combinatorial optimization. It involves connecting a specific set of vertices (referred to as terminals) in a weighted graph while aiming to minimize the total cost of the edges used. The problem also allows for the inclusion of additional vertices, known as Steiner points, which can help reduce the overall cost. This problem has a long history and was formally defined mathematically by Hakimi in 1971 \cite{DBLP:journals/networks/Hakimi71}. It is recognized as one of the classic NP-hard problems \cite{DBLP:conf/coco/Karp72}. The Steiner Tree problem finds applications in various domains, including network design \cite{DBLP:books/daglib/0069809} and phylogenetics \cite{DBLP:journals/classification/Rohlf05}, prompting continuous research efforts to develop more efficient approximation algorithms.

Initial algorithmic strategies for the Steiner Tree problem, while heuristic in nature, set the stage for more precise approaches. Zelikovsky's 1993 introduction of a polynomial-time approximation algorithm achieved an 11/6-approximation ratio \cite{DBLP:journals/algorithmica/Zelikovsky93}, which was followed by further improvements including Karpinski and Zelikovsky's 1.65-approximation in 1995 \cite{DBLP:journals/eccc/ECCC-TR95-030}. 
The approach was refined to a 1.55-approximation by Robins and Zelikovsky in 2005 \cite{DBLP:journals/siamdm/RobinsZ05}, and by 2010, Byrka, Grandoni, Rothvoß, and Sanità advanced this to a~\mbox{1.39-approximation} \cite{DBLP:conf/stoc/ByrkaGRS10}.
An earlier MST-based 2-approximation algorithm, introduced in the early 1980s, also played a crucial role due to its simplicity \cite{DBLP:journals/acta/KouMB81}.

The computational complexity of the Steiner Tree problem has been firmly established. Bern and Plassmann showed its MAX SNP-hardness, indicating the absence of a polynomial-time approximation scheme (PTAS) for this problem unless P equals NP \cite{DBLP:journals/ipl/BernP89}. Building on this, Chlebík and Chlebíková in 2008 established a lower bound, demonstrating that approximating the Steiner Tree problem within a factor of 96/95 of the optimal solution is NP-hard.
This finding marks a crucial step in understanding the inherent complexity of the problem \cite{DBLP:journals/tcs/ChlebikC08}.

In combinatorial optimization, prize-collecting variants are distinct for their detailed decision-making approach. These variants focus not only on building an optimal structure but also on intentionally excluding certain components, which leads to a penalty. This introduces more complexity and makes these problems more applicable to real-world scenarios. The concept of prize-collecting problems in optimization was first brought forward by Balas in the late 1980s \cite{Balas89}. This pioneering work opened a new research direction, particularly in scenarios where avoiding certain elements results in penalties. Following this, the first approximation algorithms for prize-collecting problems were introduced in the early 1990s by authors including Bienstock, Goemans, Simchi-Levi, and Williamson \cite{BienstockGSW93}. Their initial contributions have significantly shaped the research direction in this area, focusing on developing solutions that effectively balance costs against penalties.

The Prize-collecting Steiner Tree (PCST) problem is a key example in this category, as it takes into account both the costs of connectivity and penalties for excluding vertices. In this problem, we consider an undirected graph \( G = (V, E) \) where \( V \) represents vertices and \( E \) represents edges. Each edge \( e \in E \) has an associated cost \( c(e) \), and each vertex \( v \in V \) comes with a penalty \( \pi(v) \)  that needs to be paid if the vertex is not connected in the solution. The objective is to find a tree \( T = (V_T, E_T) \) within \( G \) that minimizes the sum of edge costs in \( T \) and penalties for vertices not in \( T \). This is mathematically expressed as:

\[
\text{Minimize} \quad \sum_{e \in E_T} c(e) + \sum_{v \in V \setminus V_T} \pi(v).
\]

This formulation captures the essence of the PCST problem: a trade-off between the infrastructure cost, represented by the sum of the edge costs within the chosen tree, and the penalties assigned to vertices excluded from this connecting structure.
This detailed view of the problem applies to various situations, such as network design where not every node needs to be connected, and resource allocation where some demands might not be met, resulting in a cost.

Initial strides in developing approximation algorithms for PCST were made by Bienstock, Goemans, Simchi-Levi, and Williamson with a 3-approximation achieved through linear programming relaxation \cite{BienstockGSW93}. Subsequent advancements by Goemans and Williamson, and later by Archer, Bateni, Hajiaghayi, and Karloff, refined the approximation ratio to 2 and 1.967, respectively \cite{DBLP:journals/siamcomp/GoemansW95, DBLP:conf/focs/ArcherBHK09}.
Our work contributes to the ongoing research efforts in the field by presenting a 1.7994-approximation algorithm for the PCST problem, improving upon the previous best-known ratio of 1.967 established in 2009~\cite{DBLP:conf/focs/ArcherBHK09}. This achievement marks progress in enhancing the efficiency of solutions for this long-standing open problem.

Besides PCST, the Prize-collecting Steiner Forest (PCSF) problem stands as another open area of research in combinatorial optimization. In PCSF, the objective is to efficiently connect pairs of vertices, each of which has an associated penalty for remaining unconnected. Work on this area began with the work of Agrawal, Klein, and Ravi \cite{AKRSTOC91, DBLP:journals/siamcomp/AgrawalKR95}. Following this, 3-approximation algorithms were developed using cost-sharing and iterative rounding, respectively \cite{DBLP:conf/soda/GuptaKLRS07, DBLP:conf/latin/HajiaghayiN10}. Progress continued with Hajiaghayi and Jain's 2.54-approximation algorithm \cite{DBLP:conf/soda/HajiaghayiJ06}, and more recently, the 2-approximation by Ahmadi, Gholami, Hajiaghayi, Jabbarzade, and Mahdavi \cite{DBLP:journals/corr/abs-2309-05172}.

Another related problem, the Prize-collecting version of the classic Traveling Salesman Problem (PCTSP), focuses on optimizing the length of the route taken while also accounting for penalties associated with unvisited cities.
Although the natural LP formulations for PCTSP and PCST share lots of similarities, PCTSP has experienced considerably more progress.
The first breakthrough in breaking the barrier of $2$ for PCST also introduced a $1.98$-approximation algorithm for PCTSP \cite{DBLP:conf/focs/ArcherBHK09}.
Subsequently, Goemans improved this to a $1.91$ approximation factor \cite{DBLP:journals/corr/abs-0910-0553}.
The approximation factor was further improved to $1.774$ by Blauth and Nägele \cite{DBLP:conf/stoc/BlauthN23}, and most recently, to $1.599$ by Blauth, Klein, and Nägele \cite{DBLP:journals/corr/abs-2308-06254}. 
These advances in PCSF and PCTSP underline the significance and continuous research interest in prize-collecting problems.


\subsection{Contribution Overview}

In this paper, we focus on rooted PCST where a designated vertex, denoted as $\Root$, must be included in the solution tree. The objective is to connect other vertices to $\Root$ or pay their penalty. The general PCST and its rooted variant are equivalent. Solving the general PCST involves iterating over all vertices as potential roots and solving the rooted variant for each. Conversely, we can adapt the general version to address rooted PCST by assigning an infinite penalty to the root vertex, ensuring its inclusion in the optimal solution. This two-way equivalence is crucial for our approach, allowing us to concentrate on rooted PCST and extend our findings to the general case. In the rooted version, we define an instance of the PCST problem using a graph $\G=(\V,\E,\cc)$ with edge weight function $\cc: \E \to \mathbb{R}_{\ge 0}$, root vertex $\Root$, and penalty function $\pi: \V \rightarrow \mathbb{R}_{\ge 0}$. In the penalty function, while only non-root vertices have actual penalties, we include $\Root$ in the domain of $\pi$ and assume it has penalty $\pi(\Root)=\infty$. This does not affect the actual costs of solutions, but simplifies our statements by adding consistency.

In designing our algorithm, we utilize the recursive approach introduced by~\cite{DBLP:journals/corr/abs-2309-05172}. 
The concept involves running a baseline algorithm with a higher approximation factor on PCST to get an initial solution. 
We then account for the penalties associated with any vertices identified by the baseline algorithm, paying these penalties, and subsequently removing their penalties from consideration.
Next, we apply a Steiner tree algorithm to the remaining vertices to obtain another solution. We then call our algorithm recursively with the adjusted penalties. 
At each recursive step, two algorithms are executed on the current input, each producing a tree as a solution. Our procedure aggregates all solutions generated during the recursion process and selects the one with the lowest cost as the final output.

We give a quick overview of the major components of our algorithm here.
\paragraph{Goemans and Williamson Algorithm for PCST.} 
We use a slightly modified version of the algorithm introduced by Goemans and Williamson in \cite{DBLP:journals/siamcomp/GoemansW95} as the baseline algorithm in the recursive process. We briefly present this algorithm for completeness. Throughout the paper, we refer to this algorithm as $\pcstgw$ and denote the solution found by the algorithm as $\GW$.

Let's assume that each edge of the input graph $\G$ is a curve with a length equal to its cost.
We want to build a spanning tree $\F$, which starts as a forest during our algorithm and transforms into a tree by the end of the algorithm. 
We then remove certain edges from this tree to obtain our final tree $\T$ and pay penalties for every vertex outside $\T$.

To run our algorithm, we define $\CC$ as the connected components of $\F$, and active sets $\AS$ as subsets of $\CC$. 
Initially, both $\CC$ and $\AS$ consist of single-member sets, with each vertex belonging to exactly one set. 
We assign a unique color to each vertex of the graph, with the value $\pi(v)$ representing the total duration that color $v$ can be used. 
As $\pi(\Root)=\infty$, the color of $\Root$ can be used without any limitation.

At any moment, each active set colors its adjacent edges (edges with exactly one endpoint in that set) with the color of one of its vertices that still has available color.

Every time an edge becomes fully colored, it will be added to $\F$, and subsequently, the connected components of $\F$ and active sets will be updated. 
Moreover, if all vertices in an active set run out of color, the active set becomes deactivated and will be considered a dead set, along with all the vertices inside it. 
We continue this process until all vertices are connected to the root. 
Note that this is ensured since the root has an infinite amount of color.

After the completion of this process, we remove some edges from $\F$ to obtain $\T$. 
We will select every dead set $S$ that cuts exactly one edge of $\F$ and remove all vertices in $S$ from $\F$ to obtain $\T$. 
Every live vertex, which refers to vertices not marked as dead, will be connected to the $\Root$ in $\T$, along with some dead vertices.
In fact, the tree $\T$ is \textit{the smallest} subtree of $\F$ that contains all live vertices, including $\Root$, and every vertex whose color has been used in $\T$.

\paragraph{Steiner Tree Algorithm for PCST.} 
Here we want to construct a new solution $\ST$ based on the outcome of $\pcstgw$.
During the execution of the $\pcstgw$ algorithm, certain active sets and their vertices may reach a dead state, leaving them incapable of coloring edges as their vertices have used all of their colors.
In such cases, it is reasonable to pay their penalties and subsequently remove them from consideration. 
This decision makes sense, as connecting these vertices to other vertices requires excessive costs compared to their penalties.

In the $\GW$ solution, some of these dead vertices may eventually connect to the root when other active sets link to them, and we utilize these dead vertices to connect live vertices to $\Root$. 
However, in $\ST$, we pay the penalties of all dead vertices and seek a tree that efficiently connects other vertices to $\Root$. 
The problem of finding a minimum tree that connects a set of vertices to $\Root$ is known as the Steiner Tree problem, and we employ the best-known algorithm for this, assuming it has a $\p$ approximation factor, which currently is $\ln(4) + \epsilon$~\cite{DBLP:conf/stoc/ByrkaGRS10}.

Improving the approximation factor of the Steiner Tree algorithm would consequently enhance the approximation factor of our PCST algorithm. 
It's worth noting that one might suggest paying penalties only for vertices that the $\GW$ solution pays penalties for, rather than all dead vertices. 
However, the $\GW$ solution may connect all vertices to the root and influence the Steiner Tree algorithm to establish connections for every vertex. 
This constraint restricts the algorithm's flexibility in exploring alternative tree structures.

\paragraph{Iterative algorithm.} 
Now, let's explore our iterative algorithm. 
Our aim is to create an iterative procedure that results in a $\alf$-approximation algorithm for PCST. 
We will discuss the value of $\alf$ in the future.

At the initiation of our algorithm, we divide the vertex penalties by a constant factor $\bta$ to obtain $\pb$. 
The idea of altering penalties has been used in~\cite{DBLP:journals/siamcomp/ArcherBHK11}, but they focus on increasing penalties, while we decrease them.
The specific value of $\bta$ will be determined towards the conclusion of our paper. 
This determination will be based on the value of $\p$, representing the best-known approximation factor for the Steiner Tree problem, with the goal of minimizing the approximation factor $\alf$.

Now, we execute $\pcstgw$ using the modified penalties $\pb$. 
Running $\pcstgw$ on $\pb$ provides us with a tree $\TGW$, and paying the penalty of vertices outside $\TGW$ yields one solution for the input. 
Subsequently, we pay the penalty of every vertex that becomes dead during the execution of $\pcstgw$, set their penalty to zero for the remainder of our algorithm, and connect the remaining vertices using the best-known algorithm for the Steiner tree problem, denoted as $\steinertree$.
The tree generated by $\steinertree$, denoted as $\TST$, presents another solution for the input.

Then, if no vertices with a non-zero penalty become inactive in $\pcstgw$, indicating that we haven't altered the penalties of vertices at this step, we terminate our algorithm by returning the minimum cost solution between $\TGW$ and $\TST$. Otherwise, we recursively apply this algorithm to the new penalties, and refer the tree of the best solution found by the recursive approach as $\TIT$.

Finally, we select the best solution among $\TGW$, $\TST$, and $\TIT$. It's important to note that our algorithm essentially identifies two solutions at each iteration and, in the end, selects the solution with the minimum cost among all these alternatives.

In analyzing our algorithm, we focus on its initial step, specifically the first invocation of $\pcstgw$ and $\steinertree$. 
We categorize vertices based on their status in $\pcstgw$, distinguishing between those marked as dead or live, and whether their penalties have been paid in both $\pcstgw$ and the optimal solution.
Additionally, we classify active sets based on whether they color only one edge or more than one edge of the optimal solution. Through this partitioning, we derive lower bounds for the optimal solution and upper bounds for the solutions $\TGW$ and $\TST$. 
Leveraging the recursive nature of our algorithm, we establish an upper bound for the solution $\TIT$ using induction. 
Following that, we evaluate how much these solutions deviate from $\alf \cdot \cOPT$.

Next, we show that for $\bta= 1.252$ and $\alf=1.7994$, a weighted average of the cost of the three solutions is at most $\alf\cdot\cOPT$. This shows that our algorithm when using this value of $\bta$ is a $1.7994$ approximation of the optimal solution since the minimum cost is lower than any weighted average. We note that throughout our analysis, we do not know the value of $\alf$. Instead, we obtain a system of constraints involving $\alf$, $\bta$, $\p$, and the weights in the weighted average which needs to be satisfied in order for our proof steps to be valid. Then, we find a solution to this system minimizing $\alf$ to find our approximation guarantee. In this solution, we use $\p=\ln(4)+\epsilon$, using the current best approximation factor for the Steiner tree~\cite{DBLP:conf/stoc/ByrkaGRS10}. Finally, we explain the intuition behind certain parts of our algorithm, including why we need to consider all three solutions that we obtain.

\paragraph{Outline.} In Section~\ref{sec:gw}, we explain Goemans and Williamson's $2$-approximation algorithm for PCST \cite{DBLP:journals/siamcomp/GoemansW95}, using the coloring schema effectively utilized by \cite{DBLP:journals/corr/abs-2309-05172} for PCSF.
Then, in Section~\ref{sec:iterative_alg}, we present our iterative algorithm along with its analysis.
Finally, in Section~\ref{sec:intuition}, we highlight the importance of employing both algorithms in conjunction with the iterative approach to improve the approximation factor.

\subsection{Preliminaries}
Throughout our paper, we assume without loss of generality that the given graph is connected.

Let $T$ be a subgraph, then $c(T)$ denotes the total cost of edges in $T$, i.e., $c(T) = \sum_{e \in T} c(e)$.

For a subgraph $T$, we use $\V(T)$ to represent the set of vertices in $T$, and $\overline{\V(T)}$ denotes the set of vertices outside $T$.

Given a subset of vertices $S \subseteq \V$, we define $\pi(S) = \sum_{v \in S} \pi(v)$ as the sum of penalties associated with vertices in $S$.

For a PCST solution $X$, we denote its corresponding tree as $T_X$. 
Furthermore, we use $\cost_X$ to represent the total cost of $X$, defined as $c(T_X) + \pi(\overline{\V(T_X)})$.


\section{Goemans and Williamson Algorithm}
\label{sec:gw}
Here we define a slightly modified version of the algorithm initially proposed by Goemans and Williamson in~\cite{DBLP:journals/siamcomp/GoemansW95} (hereinafter the GW algorithm) for the sake of completeness of our algorithm. 
Then we use it as a building block in our algorithm in the next section. We introduce several lemmas stating the properties of the algorithm and its output. We defer the proofs of these lemmas to the appendix.

The algorithm consists of two phases. In the first phase, we simulate a continuous process of vertices growing components around themselves and coloring the edges adjacent to these components at a constant rate. In this process, we imagine each edge $e$ with weight $\cc(e)$ as a curve of length $\cc(e)$. Each vertex $v$ has a potential coloring duration equal to its penalty $\pi(v)$. 
We assume that the root vertex $\Root$ has $\pi(\Root)=\infty$, indicating infinite coloring potential.
This process of coloring will give us a spanning tree, which we will then trim in the second phase to get a final tree.

During the algorithm, we keep a forest $\F$ of tentatively selected edges, a set $\CC$ of connected components of this forest, and a subset $\AS$ of active sets in $\CC$. For each component $
\SET$ in $\CC$, we will also store its coloring duration $\ys$.
Initially, the forest $\F$ is empty, every vertex is an active set in $\CC$, and all $\ys$ values are $0$.

At any moment in the process, all active sets color their adjacent edges using the coloring potential of their vertices at the same rate. So, the amount of color on each edge is the total duration its endpoints have been in active sets.  
We define an edge as fully colored if the combined active time of its endpoints totals at least the length of the edge while they belong to different components. 
When such an edge between two sets becomes fully colored, it is added to $\F$, and the two sets containing its endpoints are merged, with their coloring potentials summed together.
An active set becomes inactive if it runs out of coloring potential. This means that this set and its subsets have used the coloring potential of all the vertices in the set.
We call an edge getting added to $\F$ or an active set becoming inactive events in the coloring process. 
It may be possible for multiple events to happen simultaneously, and in that case, we would handle them one by one in an arbitrary order. 
The addition of one edge in the order may prevent the addition of other fully colored edges. However, this can only happen if the latter edge forms a cycle in $\F$, and therefore, the resulting components are independent of the order in which we handle the events.
As the component containing $\Root$ remains active and edges are only added between different components, $\F$ will eventually become a spanning tree of $\G$. 
This marks the completion of the coloring phase.

In the second phase, we will select a subset of $\F$ as our Steiner tree and pay the penalties for the remaining vertices. 
We refer to any active set that becomes inactive as a dead set.
Throughout the first phase, we maintain dead sets in $\DS$ to utilize them in the second phase. 
We categorize vertices into dead and live, where a dead vertex is any vertex contained in at least one dead set, and all other vertices are considered live. 
We store dead vertices in $\dead$ and return them at the end of $\pcstgw$ since they are used in our iterative algorithm in the next section.
For any dead set $\SET$, if there is exactly one edge of $\F$ cut by $\SET$ (i.e., $\left|\delta(\SET)\cap\F\right|=1$), we remove this edge and all the edges in $\F$ that have both endpoints in $\SET$. 
This effectively removes $\SET$ from the tree and disconnects its vertices from the root. 
We repeat this process until no dead set with this property can be found. Figure \ref{fig:tree-dead} illustrates how dead sets may be removed.

As each operation in the second phase disconnects only the selected dead set from the root, the final result will be a tree $\T$ that contains all the live vertices, including $\Root$. We pay the penalties for the vertices outside the tree, which are all dead vertices belonging to the dead sets we removed in the second phase. Algorithm \ref{alg:gw} provides a pseudocode that implements this process.

\begin{figure}
\begin{center}
\begin{tikzpicture}[scale=1.3]
\def\r{0.1}
\foreach \i/\l/\col in {0/$\Root$/black, 1//blue, 2//black, 3//, 4//black, 5//black, 6//black, 7/Live/black} {
    \node[label={above:\l}] (\i) at (\i, 0) {};
    \draw[\col, fill=\col] (\i) circle (\r);
};
\def\col{red}
\node (9) at (3, 1) {};
\draw (9) -- (3);
\draw[\col, fill=\col] (9) circle (\r);
\draw[\col, line width=3] (9) circle (0.6);
\node[label={[text=\col]above:Dead}] at ($(9) + (0, 0.6)$) {};
\draw[\col] (9) -- ($(9)!0.6!(3)$);

\node (10) at (1, 1) {};
\draw (10) -- (1);
\draw[\col, fill=\col] (10) circle (\r);
\draw[\col, line width=3] (10) circle (0.4);
\node[label={[text=\col]above:Dead}] at ($(10) + (0, 0.4)$) {};
\draw[\col] (10) -- ($(10)!0.4!(1)$);

\def\col{blue}
\draw[line width=3] (0) -- (1) -- (2) -- (3) -- (4) -- (5) -- (6) -- (7);
\draw[\col, line width=3] (1) circle (0.6);
\node[label={[text=\col]below:Dead}] at ($(1) - (0, 0.6)$) {};
\draw[\col, line width=3] (1) --  ($(1)!0.6!(0)$);
\draw[\col, line width=3] (1) --  ($(1)!0.6!(2)$);
\draw[\col] (1) --  ($(1)!0.6!(10)$);

\node (8) at (5, -1) {};
\draw[\col, fill=\col] (8) circle (\r);
\draw (8) -- (5);
\draw[\col, line width=3] (8) circle (1.2);
\node[label={[text=\col]below:Dead}] at ($(8) - (0, 0)$) {};
\draw[\col, line width=3] (8) -- (5);
\draw[\col, line width=3] (5) -- ($(5)!0.7!(4)$);
\draw[\col, line width=3] (5) -- ($(5)!0.7!(6)$);

\end{tikzpicture}
\end{center}
\caption{Illustration of dead sets in the final tree of GW algorithm. The dead sets colored in blue cut multiple edges of $\F$, and removing them would disconnect other vertices so they are not removed. On the other hand, the dead sets colored in red can be safely removed without affecting other vertices.}
\label{fig:tree-dead}
\end{figure}
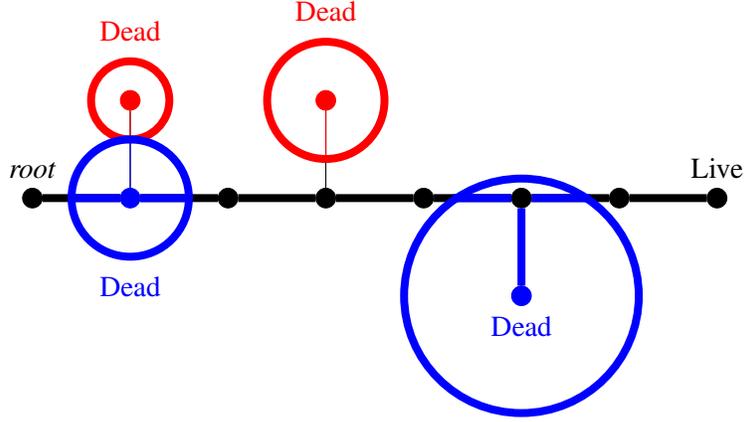

To facilitate our analysis throughout the paper, we assume that each vertex is associated with a specific color. 
During the coloring process of an active set $\SET$, we assign each moment of coloring to a vertex $v\in\SET$ with non-zero remaining coloring potential and utilize its color on the adjacent edges.
For consistency, we choose vertex $v$ based on a fixed ordering of the vertices in $\V$ where $\Root$ comes first. 
So, a set $\SET$ containing $\Root$ will always assign its coloring to $\Root$. 
We note that a set $\SET$ can not use the color of a vertex that is already dead. Based on this assignment, we define the following values:
\begin{definition}
\label{def:coldur}
For each vertex $v$, we define its total coloring duration $\yv$, and the coloring duration assigned to it by a set $\SET$ as $\ysv$: 
\begin{itemize}
    \item 
    \ysv= \text{total coloring duration using color $v$ in set $\SET$}
    \item $\yv = \sum_{\SET\subseteq \V;v\in\SET} \ysv$
\end{itemize}

Note that $\sum_{v\in\SET}\ysv = \ys$. 

\end{definition}

We bound the cost of both the chosen tree and the penalty of the dead vertices in the following lemmas. 
Proofs of these lemmas are in the appendix given their similarity to~\cite{DBLP:journals/siamcomp/GoemansW95}. 

\begin{lemma}
    \label{lemma:gwtree} 
    Let $\T$ be the tree returned by Algorithm \ref{alg:gw}. 
    We can bound the total weight of this tree by
    \begin{align*}
        \cc(\T)\leq 2\cdot\sum_{\substack{\SET \subseteq \V-\RSET;\\\SET\cap\V(\T)\neq\emptyset}}\ys=2\cdot\sum_{v \in \V(\T)-\RSET}\yv\text{.}
    \end{align*}
\end{lemma}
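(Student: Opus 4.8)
The plan is to prove the bound by the standard Goemans–Williamson primal-dual charging argument, adapted to the coloring language used here. I would set up the argument so that the total edge cost of $\T$ is paid for by the coloring durations of the active sets, and the factor of $2$ arises from an averaging over the components in a ``snapshot'' of the forest.

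First I would observe that every edge $e\in\T$ is fully colored (it was added to $\F$ during the first phase), so $\cc(e)$ equals the total amount of color deposited on $e$ by active sets while $e$ had its two endpoints in distinct components. Summing over $e\in\T$, we get $\cc(\T)=\sum_{e\in\T}\sum_{\SET} (\text{time } \SET \text{ is active and } e\in\delta(\SET)) = \sum_{\SET} \ys\cdot |\delta(\SET)\cap\T|$, where the sum is over all sets $\SET$ that were active at some moment (with $\ys$ the coloring duration during which $\SET$ colors, counting the appropriate infinitesimal intervals). The crux is then to show that, at every moment of the process, $\sum_{\SET\in\AS,\ \SET\cap\V(\T)\ne\emptyset} |\delta(\SET)\cap\T| \le 2\cdot |\{\SET\in\AS : \SET\cap\V(\T)\ne\emptyset\}|$; integrating this instantaneous inequality over time gives $\cc(\T)\le 2\sum_{\SET:\SET\cap\V(\T)\ne\emptyset}\ys$, and since sets containing $\Root$ are handled separately (the $\Root$ color is never ``charged'' to the tree in the sum on the right, or rather $\Root\notin \V-\RSET$), I restrict the sum to $\SET\subseteq\V-\RSET$. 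The final equality $\sum_{\SET\subseteq\V-\RSET,\ \SET\cap\V(\T)\ne\emptyset}\ys = \sum_{v\in\V(\T)-\RSET}\yv$ follows from Definition~\ref{def:coldur}: each $\ys$ splits as $\sum_{v\in\SET}\ysv$, only live (hence in-tree, after the trimming argument) vertices get charged, and $\yv=\sum_{\SET\ni v}\ysv$; one needs to check that every set $\SET$ with $\SET\cap\V(\T)\ne\emptyset$ actually assigns all of its coloring to vertices inside $\V(\T)$, which holds because a set only uses colors of non-dead vertices and the trimming in phase two removes exactly the dead sets that are ``pendant,'' keeping all vertices whose color was used inside $\T$.

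The instantaneous degree inequality is the heart of the matter and is proved by the usual argument: fix a moment in time and consider the forest $\F'$ of edges already added by then. Contract each component of $\F'$ to a node; the edges of $\T$ not inside any component form a forest $H$ on these contracted nodes. Call a contracted node ``active'' if its component is an active set meeting $\V(\T)$, and ``inactive'' otherwise. Leaves of $H$ cannot be inactive-and-not-$\Root$: an inactive (dead) leaf component would be a dead set cutting exactly one edge of $\T\subseteq\F$, which the second phase would have removed — contradicting that it meets $\V(\T)$. Hence every leaf of $H$ is either active or is the $\Root$ node (which contributes nothing to the left side since $\Root\notin\V-\RSET$; more carefully, one discards the $\Root$ node and its incident edge). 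A forest in which all but at most one leaf are ``active'' has $\sum_{\text{active }v}\deg_H(v)\le 2(\#\text{active nodes})$ by the handshake/leaf-counting lemma, which is exactly the claim after noting $\deg_H(v)=|\delta(\SET)\cap\T|$ for the set $\SET$ corresponding to $v$.

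I expect the main obstacle to be the bookkeeping around dead vertices and the second-phase trimming: one must argue carefully that (a) no dead set that still meets $\V(\T)$ can be a pendant (degree-one) set in the snapshot, so that the leaf argument goes through, and (b) the reindexing from sets to vertices is exact, i.e., $\sum_{\SET:\SET\cap\V(\T)\ne\emptyset}\ys$ really does equal $\sum_{v\in\V(\T)-\RSET}\yv$ with no over- or under-counting — this uses that a set never colors using a dead vertex's color, that dead sets form a laminar family, and that every dead vertex whose color appears in $\T$ survives the trimming (as stated in the algorithm description, $\T$ is the smallest subtree of $\F$ containing all live vertices and every vertex whose color was used in $\T$). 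Since the excerpt says these proofs are deferred to the appendix ``given their similarity to~\cite{DBLP:journals/siamcomp/GoemansW95}'', I would keep the write-up at the level of invoking the classical analysis with the coloring-assignment modification, spelling out only the degree inequality and the set-to-vertex identity in detail.
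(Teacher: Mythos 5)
Your overall strategy is the same as the paper's: write $\cc(\T)=\sum_{\SET}\ys\cdot|\delta(\SET)\cap\T|$ using the fact that every tree edge is fully colored, prove a per-moment degree bound on the tree $H$ obtained by contracting the current components, use the second-phase trimming to argue that no dead component meeting $\V(\T)$ can be a leaf of $H$, and finish with the handshake lemma and the set-to-vertex reindexing. The second half of your argument (the identity $\sum_{\SET\subseteq\V-\RSET,\;\SET\cap\V(\T)\neq\emptyset}\ys=\sum_{v\in\V(\T)-\RSET}\yv$, via $\ys=\sum_{v\in\SET}\ysv$ and the observation that a set meeting $\V(\T)$ only assigns color to vertices that survive the trimming) matches the paper and is fine.

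The gap is in the instantaneous inequality, exactly at the root's component. The inequality you state, $\sum_{\SET\in\AS,\;\SET\cap\V(\T)\neq\emptyset}|\delta(\SET)\cap\T|\le 2\cdot|\{\SET\in\AS:\SET\cap\V(\T)\neq\emptyset\}|$, integrates to $\cc(\T)\le 2\sum_{\SET:\,\SET\cap\V(\T)\neq\emptyset}\ys$ \emph{including} the components containing $\Root$; their total coloring duration is the entire length of phase one (essentially $y_{\Root}$), which can dominate all other terms, so ``restricting the sum to $\SET\subseteq\V-\RSET$'' afterwards is not a free step --- it strictly strengthens the claim and is the whole content of the lemma. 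The alternative you sketch, discarding the $\Root$ node and its incident edges from $H$, fails on both sides: the root's component does color edges of $\T$ (so its degree must remain on the left-hand side), and deleting that node can turn a dead neighbor of the root into a leaf, breaking the ``all leaves are active'' argument. The correct bookkeeping is to keep the root's (always active) component in $H$ and in the active count $\ah$, and to prove the stronger bound $\sum_{\SET\in\ah}\deg_H(\comp(\SET))\le 2(|\ah|-1)$: the tree $H$ has total degree $2(|\V(H)|-1)$, every dead node has degree at least $2$, and subtracting $2|\ih|$ leaves exactly $2(|\ah|-1)$, which is $2$ times the number of active components \emph{not} containing $\Root$ --- precisely the growth rate of the lemma's right-hand side. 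The $-1$ you dropped is exactly what pays for excluding the root's component from that side.
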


\begin{lemma}
    \label{lemma:gwdead} 
    For any vertex $v \in \V$, we have $\yv \leq \pi(v)$. 
    Furthermore, if $v \in \dead$ which means it is a dead vertex, we have $\yv=\pi(v)$.
\end{lemma}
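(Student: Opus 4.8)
The plan is to prove both claims by analyzing how colors are used during the first phase of the algorithm, exploiting the fact that each vertex $v$ has a fixed "budget" of coloring potential equal to $\pi(v)$, and that colors are consumed monotonically over time. First I would set up the bookkeeping: at any moment $t$ in the continuous process, let $y_v(t)$ denote the total amount of color $v$ that has been used across all active sets up to time $t$. By the coloring rule, at each instant $v$'s color is used by at most one active set (the unique component currently containing $v$, and only if that component is active and has selected $v$ as its representative), so $y_v(t)$ increases at rate at most $1$. The key invariant I would establish is that an active set containing $v$ only uses $v$'s color while $v$ still has "remaining coloring potential," i.e., while $y_v(t) < \pi(v)$; this is exactly the rule stated in the algorithm ("a set $\SET$ cannot use the color of a vertex that is already dead," and more precisely assigns coloring only to vertices with non-zero remaining potential). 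Combining the rate bound with this invariant gives $\yv = y_v(\infty) \le \pi(v)$, which is the first claim. Formally this follows since the total color used is $\yv=\sum_{\SET}\ysv$, and summing the durations over the (disjoint in time) intervals during which various sets use $v$'s color, this total cannot exceed the threshold $\pi(v)$ at which $v$ becomes exhausted.

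For the second claim, suppose $v \in \dead$, meaning $v$ lies in some active set $\SET_0$ that became inactive (a dead set). By definition, an active set becomes inactive precisely when it "runs out of coloring potential," i.e., when the coloring potential of all vertices in the set has been fully used. So at the moment $\SET_0$ dies, every vertex $u \in \SET_0$ — in particular $v$ — has $y_u = \pi(u)$. Since by the first claim $\yv$ never exceeds $\pi(v)$, and $\yv \ge y_v(t_0) = \pi(v)$ where $t_0$ is the death time of $\SET_0$, we conclude $\yv = \pi(v)$. One subtlety to address: a vertex might be claimed to be in a dead set via a chain of merges, so I should make sure the definition of "dead set" used here is the set that actually became inactive (as opposed to some later superset), and that "all vertices in the set have used their potential" is the correct deactivation condition — this is exactly how the excerpt phrases it ("this set and its subsets have used the coloring potential of all the vertices in the set"), so no further argument is needed beyond citing that rule.

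The main obstacle, and the place where I would be most careful, is making the continuous-process argument rigorous in the presence of simultaneous events and the tie-breaking rule. Specifically, I need to ensure that the quantity $y_v(t)$ is well-defined and monotone even though events (edge additions, set deactivations) are processed one at a time in an arbitrary order at a given instant, and that the fixed vertex-ordering used to pick the representative of each active set does not create a situation where $v$'s color is used past its budget. The resolution is that deactivation of a set is itself an event: the instant any vertex's cumulative usage would reach $\pi(v)$, that is the deactivation time, and the set stops coloring (with $v$'s color) at that instant. Since the algorithm never assigns coloring to a vertex with zero remaining potential, the budget is respected exactly, with equality attained precisely for dead vertices. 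I would phrase this as: integrate the rate of color-$v$ usage over time; the integral is at most $\pi(v)$ always, and equals $\pi(v)$ the moment any set containing $v$ deactivates. Given the structural similarity to the original Goemans–Williamson analysis, I expect the bulk of the write-up to mirror standard arguments, with the continuous-time formalization being the only genuinely delicate point.
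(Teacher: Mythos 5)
Your proposal is correct and takes essentially the same approach as the paper: the paper's proof is a two-sentence appeal to the definitions (coloring only consumes remaining potential, so $\yv\leq\pi(v)$; a set deactivates exactly when all its vertices' potential is exhausted, so $\yv=\pi(v)$ for dead vertices), and your write-up is the same argument with the time-indexed bookkeeping made explicit.
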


\begin{lemma}
    \label{lemma:gwlive} 
    Any vertex $v \not\in \V(\T)$ is a dead vertex.
\end{lemma}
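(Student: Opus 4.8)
The plan is to trace which vertices can fall out of the tree during the second phase of the $\pcstgw$ algorithm and to show that only dead vertices ever do. At the end of the first phase $\F$ is a spanning tree of $\G$, so every vertex of $\V$ lies in the tree at that moment. The final tree $\T$ is obtained from $\F$ solely by deleting edges in the second phase — edges are never added back — so $v\notin\V(\T)$ holds precisely when some second-phase step disconnects $v$ from the component containing $\Root$. Thus it suffices to prove that each such step detaches from $\Root$'s component only vertices that belong to the dead set being processed in that step; since any vertex of a dead set is by definition a dead vertex, the lemma follows.

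So consider one step of the second phase that processes a dead set $\SET$ with exactly one edge $e$ of the current forest crossing $\SET$, where we delete $e$ together with all forest edges internal to $\SET$. First, $\Root\notin\SET$: the component of $\F$ containing $\Root$ stays active throughout the first phase because $\pi(\Root)=\infty$, hence it never becomes a dead set, and no dead set can contain $\Root$ since every dead set is contained in an active set that turned inactive. Next, because $e$ is the \emph{only} edge of the current forest leaving $\SET$, the vertices of $\SET$ form exactly the subtree of $\Root$'s component hanging off the non-root endpoint of $e$; deleting $e$ therefore separates precisely $\SET$ from the part containing $\Root$, and removing the edges internal to $\SET$ afterwards has no effect on connectivity among vertices outside $\SET$. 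Hence this step removes from $\Root$'s component exactly the vertices of $\SET$, each of which is a dead vertex.

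Combining the two observations: any $v\notin\V(\T)$ is disconnected from $\Root$ at some second-phase step, that step processes a dead set $\SET$ with $v\in\SET$, and so $v$ is dead, as claimed. The only point requiring care — and the place where the bookkeeping of the algorithm actually matters — is to check that the description ``the vertices of $\SET$ hang off $e$'' is valid with respect to the \emph{current} (progressively shrunk) forest rather than the original $\F$, and that what we retain at every step is still a tree containing $\Root$. I expect this to be the main (though routine) obstacle, and would handle it by induction on the number of second-phase steps: the retained object is always $\Root$'s component of the current forest, hence a tree; when a dead set is processed it is, by laminarity of the family of dead sets, either disjoint from or nested with the sets removed so far, so the step either deletes edges incident only to already-detached vertices or performs the clean single-edge split analyzed above, and in both cases the invariant — and the conclusion that newly detached vertices lie in the processed dead set — is preserved.
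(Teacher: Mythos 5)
Your proof is correct and follows essentially the same route as the paper's: phase two only ever detaches the vertices of the dead set being processed (since exactly one forest edge crosses it and no dead set contains the root or any live vertex), so every vertex that ends up outside $\T$ is dead. The extra care you take with laminarity and the progressively shrunk forest is a more detailed version of the same one-line argument the paper gives.
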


Lemmas \ref{lemma:gwtree}, \ref{lemma:gwdead}, and \ref{lemma:gwlive} immediately conclude the following lemma.

\begin{lemma}
    \label{lemma:gwbound} The total cost of the GW algorithm is bounded by
    \begin{align*}
        \cGW = \cc(\T) + \pi(\overline{\V(\T))}\leq 2\cdot\sum_{v \in \V(\T)-\RSET}\yv + \sum_{v\not\in\V(\T)} \yv\text{.}
    \end{align*}
\end{lemma}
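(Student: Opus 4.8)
The plan is to simply chain together the three preceding lemmas. We want to bound $\cGW = \cc(\T) + \pi(\overline{\V(\T)})$ from above by the claimed quantity. Start with the two summands separately. For the first summand $\cc(\T)$, Lemma \ref{lemma:gwtree} gives directly $\cc(\T) \leq 2\cdot\sum_{v \in \V(\T)-\RSET}\yv$, so nothing further is needed there.

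For the second summand $\pi(\overline{\V(\T)})$, first expand it by definition as $\sum_{v \notin \V(\T)} \pi(v)$. By Lemma \ref{lemma:gwlive}, every vertex $v \notin \V(\T)$ is a dead vertex, i.e.\ $v \in \dead$. Then by the second part of Lemma \ref{lemma:gwdead}, for each such dead vertex we have $\yv = \pi(v)$. Substituting, $\pi(\overline{\V(\T)}) = \sum_{v \notin \V(\T)} \pi(v) = \sum_{v \notin \V(\T)} \yv$.

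Adding the two bounds yields
\begin{align*}
\cGW = \cc(\T) + \pi(\overline{\V(\T)}) \leq 2\cdot\sum_{v \in \V(\T)-\RSET}\yv + \sum_{v\notin\V(\T)} \yv\text{,}
\end{align*}
which is exactly the claimed inequality. There is essentially no obstacle here: the lemma is an immediate corollary, and the only mild subtlety is being careful that $\overline{\V(\T)}$ contains no vertex for which $\yv$ could be strictly less than $\pi(v)$ — but that is precisely ruled out by Lemma \ref{lemma:gwlive} together with the equality case of Lemma \ref{lemma:gwdead}. One could also note for completeness that $\Root \in \V(\T)$ (since $\Root$ is live, having infinite potential), so the $-\RSET$ in the first sum costs us nothing and the decomposition of the vertex set into $\V(\T)$ and $\overline{\V(\T)}$ is clean.
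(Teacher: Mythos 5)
Your proof is correct and is exactly the argument the paper intends: the paper gives no separate proof, stating only that Lemmas \ref{lemma:gwtree}, \ref{lemma:gwdead}, and \ref{lemma:gwlive} immediately imply the bound, and your chaining of those three lemmas (tree cost from Lemma \ref{lemma:gwtree}, and $\pi(v)=\yv$ for excluded vertices via Lemmas \ref{lemma:gwlive} and \ref{lemma:gwdead}) is precisely that derivation.
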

We note that Lemma \ref{lemma:gwbound} can be used to prove that the \GW~algorithm achieves a 2-approximation by showing that the optimal solution has cost at least $\sum\limits_{v\in\V-\RSET}\yv$. We prove a stronger version of this fact in Lemma \ref{lm:opt_main_bound}.

In addition to the above lemmas on the cost of the solution and its connection to the coloring, we also prove the following lemma.
This lemma will help in our analysis in Section \ref{sec:iterative_analysis}, where we use it to introduce an upper bound for the cost of the optimal Steiner tree connecting all the live vertices in a call to the \GW~algorithm.

\begin{lemma}
\label{lem:gwlocal}
    Let $\I=(\G,\Root,\pi)$ and $\I'=(\G',\Root,\pi)$ be instances of PCST, where $\G'$ is obtained from $\G$ by adding a set of edges $\E_0$ with weight $0$ from $\Root$ to a set of vertices $\U$. Let $\yv$ be the coloring duration for vertex $v$ in a run of the \GW~algorithm on $\I$, and let $\dead$ be the set of dead vertices in this run. Let $\yv'$ and $\dead'$ be the corresponding values when running the \GW~algorithm on instance $\I'$ using the same order to assign coloring duration to vertices. We have $$\yv'\leq\yv\text{,}$$
    $$\yv'=0 \text{ if } v\in\U \text{,}$$
    and
    $$\dead'\subseteq\dead.$$
\end{lemma}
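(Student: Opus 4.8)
The plan is to couple the two runs of the GW algorithm step by step and argue that the presence of the zero-weight edges $\E_0$ from $\Root$ to $\U$ can only make components merge into the root's component sooner, never later, and hence every vertex accumulates at most as much coloring duration in the run on $\I'$ as in the run on $\I$. First I would set up a simultaneous simulation of both runs, parametrized by the (continuous) time $t$. At every time $t$, I claim by induction on the sequence of events that the partition of $\V$ into connected components in the run on $\I'$ is a coarsening of the partition in the run on $\I$ — more precisely, that every component of $\F$ in the $\I$-run is contained in a component of $\F'$ in the $\I'$-run, and moreover the root's component in the $\I'$-run contains the root's component in the $\I$-run together with all of $\U$ (since the zero-weight edges are fully colored from time $0$). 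Because the coloring rate of an edge is determined by how long its endpoints have been in \emph{active} sets, and a set stays active exactly until the colors of all its vertices are exhausted, I need the companion invariant that at every time $t$ and for every vertex $v$, the total time $v$ has spent in active sets in the $\I'$-run is at most that in the $\I$-run; equivalently $\yv'(t)\le\yv(t)$ pointwise in $t$. The refinement invariant and the duration invariant must be proved together, since each feeds the other: coarser components in $\I'$ mean edges inside a root-component are no longer cut and stop being colored, which slows down exactly the coloring that would otherwise force more merges.

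The key steps, in order: (1) observe that the edges of $\E_0$ are fully colored at time $0$ in the $\I'$-run (length $0$), so $\U\cup\{\Root\}$ and the initial root component lie in one component of $\F'$ from the start; (2) by the fixed vertex ordering with $\Root$ first, this root component always assigns its coloring to $\Root$, so no vertex of $\U$ ever has its color used while it is in the root's component — combined with the invariant that $\U$ never leaves the root component, this gives $\yv'=0$ for $v\in\U$; (3) run the event-driven induction: between events nothing changes except the pointwise-linear growth of the $\y$'s and edge colors, and the induction hypothesis (finer components and $\yv'\le\yv$ up to time $t$) implies that any edge cut in the $\I'$-run is also cut in the $\I$-run with at least as much color, so each merge event in $\I'$ has a corresponding earlier-or-simultaneous merge in $\I$, preserving the refinement; (4) for the duration invariant, note a set in the $\I'$-run is a union of sets from the $\I$-run, so its remaining color budget is the sum of the remaining budgets of the $\I$-sets composing it; using $\yv'\le\yv$ and the fact that merged components in $\I'$ can only die earlier (their budget is exhausted no later than when the last constituent $\I$-set's budget would be), one shows no vertex can spend strictly more active time in $\I'$ than in $\I$; (5) conclude $\yv'\le\yv$ at termination, and $\dead'\subseteq\dead$ because $v\in\dead'$ means $\yv'=\pi(v)$ by Lemma~\ref{lemma:gwdead}, hence $\yv\ge\yv'=\pi(v)$, and combined with $\yv\le\pi(v)$ from the same lemma we get $\yv=\pi(v)$; it remains to check this forces $v\in\dead$, which follows because the only way a vertex's full color budget is used up is for it to lie in a set that became inactive.

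The main obstacle I expect is making the joint induction genuinely airtight at coincident events and at the moment a set dies. The subtle point is that a set $S'$ in the $\I'$-run that is a union of $\I$-sets $S_1,\dots,S_k$ becomes inactive when the \emph{sum} of the colors of all vertices in $S'$ is used up; one must rule out the scenario where $S'$ stays active "too long" — longer than any single $S_i$ would — and thereby injects extra coloring duration into some vertex. The resolution is that while $S'$ is active and a superset of still-active $S_i$'s, the vertices whose color it is using are precisely those with remaining budget, and because $\yv'\le\yv$ is maintained these are a subset of vertices still live in the $\I$-run at a weakly earlier time; so the extra active time of $S'$ beyond $\max_i(\text{death time of }S_i)$ is spent using colors of vertices that in the $\I$-run were being used by \emph{some} active set at the corresponding time too. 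Handling simultaneous events by the stated arbitrary-but-consistent tie-breaking, and checking that the "smallest subtree" trimming in phase two does not interfere (it only affects which edges are output, not the $\y$-values or $\dead$), rounds out the argument. I would also invoke Lemma~\ref{lemma:gwdead} at the very end rather than re-deriving $\yv\le\pi(v)$ from scratch.
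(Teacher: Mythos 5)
Your overall strategy (couple the two runs, induct over events, then read off the three conclusions) is the same as the paper's, but the invariant you chose to induct on is strictly weaker than the one the paper uses, and that weakness opens a real gap. The paper proves that at every moment, every component of $\CC'$ \emph{not containing} $\Root$ is literally identical to a component of $\CC$ at the same moment and is active at exactly the same moments; i.e., the only coarsening that ever happens is absorption into the root component. You instead posit only that the $\I'$-partition coarsens the $\I$-partition, and you then spend your "obstacle" paragraph trying to handle a non-root component $S'$ of the $\I'$-run that is a proper union of several $\I$-components $S_1,\dots,S_k$. That case never occurs, but your argument does not show it never occurs, and your attempted resolution of it does not work: it reasons about \emph{aggregate} active time and budgets, whereas the statement to be preserved is the \emph{per-vertex} inequality $\yv'\leq\yv$. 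Concretely, if such an $S'$ existed, it would assign its coloring to the single first-in-order vertex $x$ of $S_1\cup\dots\cup S_k$ with remaining budget in the $\I'$-run, while in the $\I$-run the component containing $x$ might at that same moment be assigning its coloring to a different vertex (or $x$ might sit in a component using someone else's color entirely); then $y_x'$ grows while $y_x$ does not, and the pointwise invariant $y_x'\leq y_x$ breaks. Your joint induction therefore cannot close as stated: the coarsening invariant plus $\yv'\leq\yv$ are not jointly inductive without the stronger structural fact.

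The fix is to prove the sharper invariant directly, as the paper does: show that no event in either run can affect a non-root component of $\CC'$ without the corresponding event occurring simultaneously in the other run (an edge between two non-root $\CC'$-components has received identical coloring in both runs; a non-root $\CC'$-component dies in one run iff it dies in the other, since its subsets have been active at the same moments). From this, $\yv'=\yv$ for every vertex up to the moment it joins the root component in the $\I'$-run, after which $\yv'$ freezes while $\yv$ can only grow, giving $\yv'\leq\yv$; and every dead set of the $\I'$-run is a dead set of the $\I$-run, giving $\dead'\subseteq\dead$ directly rather than via your detour through $\yv'=\pi(v)$ (which has its own edge case at simultaneous events). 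Your steps (1) and (2) for $\yv'=0$ on $\U$ are fine, and your observation that the phase-two trimming is irrelevant to the $\y$-values is correct.
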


\begin{proof}
We will prove these facts by comparing the run of the GW algorithm on instances $\I$ and $\I'$. We can identify "moments" in the first phase of the algorithm in these runs by the total coloring duration using the color of $\Root$ which is always active, and look at the same moments across these two runs. Let $\CC$ and $\CC'$ refer to the set of components in the runs on $\I$ and $\I'$ respectively. We prove the invariant that at any moment in the run of the GW algorithm on $\I'$, for any component $\SET\in\CC'$ such that $\Root\not\in\SET$, $\SET$ would also be a component in $\CC$ at the same moment of the algorithm on instance $\I$. In addition, $\SET$ would be active for instance $\I'$ if and only if it is active in instance $\I$. Figure \ref{fig:inv} illustrates how the components in the runs can look.

Initially, at moment $t=0$, before any events are applied, the invariant holds as we start with each vertex being an isolated component in both cases. The invariant also holds after events at $t=0$ are processed: We can assume that the shared events are handled first in both runs, with the second run also having additional events corresponding to the edges in $\E_0$ being fully colored, which will only merge components with the component containing $\Root$.

We will now prove that if the invariant holds at moment $t$, it will also hold at the next moment $t'>t$ where an event happens in the second run. Combined with the invariant being true at time $t=0$, this will prove the invariant for the duration of the algorithm as the invariant can only break when an event occurs. Note that unless otherwise specified, when referring to a moment $t$, we consider the state of the runs after the events at moment $t$ have been applied.

Let $t$ be the current moment, where we know the invariant holds. Let $t'$ be the first moment after $t$ when an event happens in the run for instance $\I'$. We first claim that between $t$ and $t'$, there can be no events in the run for $\I$ that affect a component $\SET\in\CC'$ not containing $\Root$. Assume otherwise that such an event exists and the first event of this kind occurs at moment $t''<t'$. There are two possible cases:
\begin{itemize}
    \item The event corresponds to a fully colored edge getting added. One of the endpoints of this edge must be in set $\SET$. Let $\SET'$ be the set in $\CC'$ containing the other endpoint. If $\Root\not\in$ $\SET'$, then at each moment until $t''$, the component containing each endpoint has been the same between $\I$ and $\I'$. In addition, these components have been active at the same moments. So, the amount of coloring on the edge is the same in both runs, and this edge should become fully colored in the run on $\I'$ at time $t'$ too. This is in contradiction with the fact that the first event after moment $t$ for $\I'$ is at time $t'>t''$. 
    
    We arrive at the same contradiction if $\SET'$ includes $\Root$. In this case, the coloring on the edge would have been the same in both runs until the other endpoint joins a component including $\Root$. Afterward, the coloring from the endpoint in $\SET$ would be the same between the two runs, and the other end is always in an active set. So, the coloring on this edge in $\I'$ at moment $t''$ is at least as much as in $\I$ and so it must be fully colored by $t''$. This also can't happen before $t$ since $\SET$ is a component in $\CC'$, so we again get an event between $t$ and $t'$ which is a contradiction.
    \item The event corresponds to the set $\SET$ becoming inactive. Since $\SET$ and its subsets have been active sets at the same moments in both runs, if $\SET$ becomes inactive in $\I$ at time $t''$ it will also become inactive in $\I'$ at the same moment as a set becoming inactive only depends on the coloring duration of its subsets. This contradicts our assumption of the first event for $\I'$ occurring at $t'>t''$.
\end{itemize}
This shows that the invariant holds just before $t'$. We now show that events at $t'$ will not break this invariant. We note that multiple events may happen at the same moment, but as previously mentioned the order of considering the events does not change the final components. So, we assume that relevant events are taken in the same order in both runs and consider the effect of events at time $t'$ one at a time. There are again two cases for the event:
\begin{itemize}
    \item The event corresponds to an edge becoming fully colored. Let $\SET$ and $\SET'$ be the components in $\CC'$ containing the endpoints of this edge. If neither set contains $\Root$, then the same components contain these endpoints in $\CC$, and by the same argument as the previous case, this edge becomes fully colored at time $t'$ in the run for $\I$. So, we can add the edge in both runs, and the invariant will still hold for the new components. Otherwise, since the merged component will contain $\Root$, its addition to $\CC'$ does not affect the invariant and it will again hold.
    \item The event corresponds to a set $\SET$ becoming inactive. This set cannot contain $\Root$ as the set containing root has unlimited potential and never becomes inactive. So, by the same argument we used previously, this set must also be in $\CC$ and become inactive at the same time.
\end{itemize}

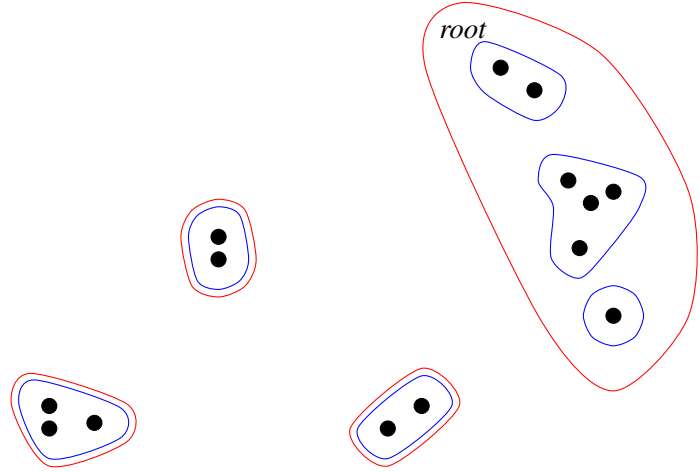
\begin{figure}
\begin{center}
\begin{tikzpicture}[scale=1]
\def\r{0.1}
\def\col{black}
\def\dx{1.5}
\def\dy{1.5}
\foreach \v/\x/\y in {
a1/0/0, a2/0/0.2, a3/0.4/0.05, 
b1/3/0, b2/3.3/0.2, 
c1/1.5/1.5, c2/1.5/1.7, 
d1/4/3.2, d2/4.3/3, 
e1/4.8/2, e2/5/2.1, e3/4.6/2.2, e4/4.7/1.6, 
f1/5/1} {
    \node[] (\v) at (\x*\dx,\y*\dy) {};
    \draw[fill=black] (\v) circle (\r);
};
\node at ($(d1) + (135:0.7)$) {\Root};
\def\mar{1}
\draw [red] plot [smooth cycle] coordinates {
($(d1) + (180:\mar)$) 
($(d1) + (120:\mar)$) 
($(d2) + (30:\mar)$) 
($(e2) + (0:\mar)$) 
($(f1) + (0:\mar)$) 
($(f1) + (270:\mar)$) 
($(f1) + (180:\mar)$) 
};
\def\mar{0.4}
\draw [blue] plot [smooth cycle] coordinates {
($(d1) + (240:\mar)$) 
($(d1) + (180:\mar)$) 
($(d1) + (120:\mar)$) 
($(d2) + (30:\mar)$) 
($(d2) + (330:\mar)$) 
($(d2) + (270:\mar)$) 
};
\def\mar{0.4}
\draw [blue] plot [smooth cycle] coordinates {
($(e3) + (240:\mar)$) 
($(e3) + (180:\mar)$) 
($(e3) + (120:\mar)$) 
($(e2) + (30:\mar)$) 
($(e2) + (330:\mar)$)
($(e4) + (0:\mar)$) 
($(e4) + (270:\mar)$) 
($(e4) + (200:\mar)$)
};
\def\mar{0.4}
\draw [blue] plot [smooth cycle] coordinates {
($(f1) + (0:\mar)$) 
($(f1) + (315:\mar)$) 
($(f1) + (270:\mar)$)
($(f1) + (225:\mar)$) 
($(f1) + (180:\mar)$) 
($(f1) + (135:\mar)$) 
($(f1) + (90:\mar)$)
($(f1) + (45:\mar)$) 
};

\foreach \mar/\col in {0.4/blue, 0.5/red} {
\draw [\col] plot [smooth cycle] coordinates {
($(c1) + (0:\mar)$) 
($(c1) + (315:\mar)$) 
($(c1) + (270:\mar)$)
($(c1) + (225:\mar)$) 
($(c2) + (180:\mar)$) 
($(c2) + (135:\mar)$) 
($(c2) + (90:\mar)$)
($(c2) + (45:\mar)$) 
};};

\foreach \mar/\col in {0.4/blue, 0.5/red} {
\draw [\col] plot [smooth cycle] coordinates {
($(a2) + (180:\mar)$) 
($(a2) + (120:\mar)$) 
($(a3) + (30:\mar)$) 
($(a3) + (330:\mar)$)
($(a1) + (270:\mar)$) 
};};

\foreach \mar/\col in {0.4/blue, 0.5/red} {
\draw [\col] plot [smooth cycle] coordinates {
($(b1) + (180:\mar)$) 
($(b1) + (270:\mar)$) 
($(b2) + (0:\mar)$) 
($(b2) + (90:\mar)$) 
};};
\end{tikzpicture}
\end{center}
\label{fig:inv}
\caption{An illustration of how the components in $\CC$ and $\CC'$ can be. The components in $\CC'$ are shown in red circles, and the components in $\CC$ are shown in blue ones. Each red component that does not include $\Root$ is also a blue component. }
\end{figure}
 This proves that the desired invariant will hold at all moments in the run for $\I'$. Now, let $\ys'$, $\ysv'$, and $\yv'$ denote the coloring duration values for this run and $\ys$, $\ysv$ and $\yv$ be the same values for the run on $\I$. Based on the above invariant, we will show that $\yv'\leq\yv$ for all vertices $v\neq\Root$.
 For any non-root vertex $v$, before it joins a component containing $\Root$ in the run on $\I'$, it belongs to the same component in both runs at any moment. In addition, these components will be active sets at the same moments.
 This is also true for all the vertices that are in the same component as $v$ in any of these moments. So, the $\ys$ and $\ys'$ values for these components up until this moment will be identical. Consequently, the $\ysv'$ values and therefore $\yv'$ will also be equal to their counterparts in the other run as the same ordering is used to assign coloring duration. After this moment, $\yv'$ will not increase anymore, as all coloring for the component will be assigned to $\Root$, and $\yv$ can only increase further. So, $\yv'\leq\yv$ for all $v\neq\Root$.
 We can also show that $y_{\Root}'\leq y_{\Root}$. Consider the moment the run on $\I$ ends. At this moment, the only component in $\CC$ is the one containing $\Root$. Based on the invariant, we can infer that this is also the only component in $\CC'$. So, the run on $\I'$ ends at least as soon as the run on $\I$. But as the component containing $\Root$ is always active and assigns its coloring to $\Root$, the $y$ value for the root is exactly the total duration of the process. So, $y_{\Root}'\leq y_{\Root}$.
 In addition, as any vertex $v\in\U$ can immediately merge with $\Root$ using the added $0$-weight edge, we will have $\yv'=0$ for these vertices.

 We can see from our proof of the invariant that any dead set in the run on $\I'$ will also be a dead set in the run on $\I$. Therefore, $\dead'\subseteq\dead$. This completes our proof of the lemma.
\end{proof}

\begin{algorithm}[ht]
  \caption{GW Algorithm}
  \label{alg:gw}
  \hspace*{\algorithmicindent} \textbf{Input:} undirected graph $\G=(\V, \E, \cc)$ with edge costs $\cc: \E \rightarrow \mathbb{R}_{\ge 0}$, root $\Root$, and penalties $\pi : \V \rightarrow \mathbb{R}_{\ge 0}$. \\
  \hspace*{\algorithmicindent} \textbf{Output:} Subtree $\T$ of $\G$ containing $\Root$, alongside a set $\dead$ of dead vertices.
  \begin{algorithmic}[1]
    \Procedure{\pcstgw}{$\I=(\G, \Root, \pi)$}
    \State Initialize $\F$ as an empty forest
    \State Initialize $\AS$ and $\CC$ as $\{\{v\}\mid v\in V\}$ \;
    \State Set $\ys\gets 0$ for all $\SET \in \AS$
    \State $\dead \gets \emptyset$
    \State $\DS \gets \emptyset$
    \While{$\left|\CC\right|>1$}
        \State $\Delta_1 \gets \min_{\SET\in\AS} (\sum_{v\in\SET}\pi(v) - \sum_{\SET'\subseteq\SET} y_{\SET'})$
        \State $\Delta_2 \gets \min_{e=(u,v)\in\E;\ e\cup\F\text{ is a forest}}(\frac{c_e-\sum_{\SET:e\in\delta(\SET)}\ys}{\left|\{\SET\in\AS\mid u\in \SET \vee v \in \SET\}\right|})$
        \State $\Delta \gets \min(\Delta_1,\Delta_2)$
        \For{$\SET \in \AS$}
        \State $\ys \gets \ys + \Delta$
        \EndFor
        \If{$\Delta_1 \leq \Delta_2$}
            \State Find a set $\SET$ minimizing $\Delta_1$
            \State $\AS \gets \AS - \{\SET\}$
            \State $\dead \gets \dead \cup \SET$
            \State $\DS \gets \DS \cup \{\SET\}$
        \Else
            \State Find an edge $e=(u,v)$ minimizing $\Delta_2$
            \State $\F \gets F\cup e$
            \State Update $\CC$ and $\AS$ accordingly
        \EndIf
    \EndWhile
    \State Extract $\T$ from $\F$ by repeatedly removing dead sets in $\DS$ that cut a single edge in $\F$
    \State \Return $(\T,\dead)$
    \EndProcedure
  \end{algorithmic}
\end{algorithm}


\section{The Iterative Algorithm}
\label{sec:iterative_alg}

In this section, we present our iterative algorithm which is described in Algorithm \ref{alg:rpcst}. In Section \ref{sec:iterative_analysis} we give an analysis for this algorithm.

Our algorithm makes use of the \pcstgw~procedure from Algorithm \ref{alg:gw} as a~fundamental component. Additionally, we employ an approximation algorithm for the~Steiner tree problem to improve the approximation factor. This can be any approximation algorithm for the Steiner tree problem. 
We denote the approximation factor for this algorithm as $\p$. Whenever we require this $\p$-approximation solution for the Steiner tree, we invoke the procedure named $\steinertree$. As our final approximation factor will depend on $\p$, we will use the current best approximation algorithm for Steiner Tree \cite{DBLP:conf/stoc/ByrkaGRS10} with $\p=\ln(4)+\epsilon$ in our analysis.
In addition, our algorithm depends on a constant $\bta$ which we will fix later in Section \ref{sec:find-alpha} to optimize the approximation ratio.

Our algorithm, as described in Algorithm~\ref{alg:rpcst}, identifies three solutions for the given PCST instance $\I=(\G,\Root,\pi)$. Subsequently, we opt for the solution with the minimum cost as the final solution for instance~$\I$.

First, we construct the instance $\Ib = (\G, \Root, \pb)$ from $\I$ by replacing $\pv$ with $\frac{\pv}{\bta}$ for all vertices.
One solution named ``\GW'' for instance $\I$, denoted as $\TGW$, can be obtained by invoking procedure~\pcstgw  (Line \ref{line:get_gw_output}) on instance $\Ib$, buying edges in $\TGW$ and paying penalties for vertices in $\NVGW$. From the definition of $\Ib$, we can conclude that $\pi(\NVGW) = \bta\pb(\NVGW)$. As stated in Section \ref{sec:gw}, in addition to $\TGW$, procedure~\pcstgw~also returns a set of vertices, $\DGW$, which represents dead vertices during the coloring process.

Another solution for instance $\I$ named ``\ST'' is obtained by retrieving a Steiner tree $\TST$ in graph $\G$ for the set of terminals $\NDGW \coloneqq \V \setminus \DGW$ which are the live vertices in the output of the $\GW$ algorithm. This solution is found using the procedure $\steinertree$ and is therefore a $\p$-approximation of the minimum Steiner tree on this terminal set. We pay the penalties for the vertices outside $\TST$, which will be a subset of $\DGW$.  


If $\DGW$ is empty, the algorithm immediately returns the solution with the lower total cost between the two obtained solutions. 
Otherwise, a third solution named ``\IT'', denoted as $\TIT$, is obtained through a recursive call on a simplified instance $\R$.
The simplified instance is formed through a process of adjusting penalties. 
We set the penalties for the vertices in $\DGW$, which are the dead vertices in the result of the $\pcstgw$ procedure, to zero while maintaining the penalty for the live vertices $\NDGW$, as indicated in Lines \ref{line:initial_pi} through \ref{line:construct_R}.


As a final step, the algorithm simply selects and returns the solution with the lowest cost. To help with the comparison of these three solutions, the algorithm calculates the values $\cGW=\cc(\TGW)+\pi(\NVGW)$, $\cST=c(\TST)+\pi(\NVST)$, and $\cIT=\cc(\TIT)+\pi(\NVIT)$, representing the costs of the solutions (as indicated in Lines \ref{line:costgw}, \ref{line:costst}, and \ref{line:costit}).

\begin{algorithm}[ht]
  \caption{Iterative PCST algorithm}
  \label{alg:rpcst}
  \hspace*{\algorithmicindent} \textbf{Input:} Undirected graph $\G=(\V, \E, \cc)$ with edge costs $\cc: \E \rightarrow \mathbb{R}_{\ge 0}$, root $\Root$, and penalties $\pi : \V \rightarrow \mathbb{R}_{\ge 0}$. \\
  \hspace*{\algorithmicindent} \textbf{Output:} Subtree $\T$ of $\G$ containing $\Root$.
  \begin{algorithmic}[1]
    \Procedure{\rpcst}{$\I=(\G, \Root, \pi)$}
        \State Construct $\pb$ by dividing all penalties by $\bta$. \label{line:set_pb}
        \State Construct the PCST instance $\Ib=(\G, \Root, \pb)$.
        \State $\TGW, \DGW \gets \pcstgw(\Ib)$
        \label{line:get_gw_output}
        \State $\cGW\gets c(\TGW)+\pi(\NVGW)$
        \label{line:costgw}
        \State $\NDGW\gets \{v: v \in \V, v \notin \DGW\}$ 
        \State $\TST\gets \steinertree(\G, \NDGW)$ 
        \label{line:get_st_output}
        \State $\cST\gets c(\TST)+\pi(\NVST)$
        \label{line:costst}
        \If{$\pi(\DGW)=0$} \label{line:check_Q_1_is_empty}
        \label{line:base-condition}
        \State\Return $\T_X$ where $\cost_X$ is minimum among $X \in \{\GW, \ST\}$
        \label{line:return-early}
        \EndIf
        \State Construct $\pi'$ by adjusting $\pi$ through the assignment of penalties for vertices in $\DGW$ to $0$.
        \label{line:initial_pi}
        \State Construct the PCST instance $\R=(\G, \Root, \pi')$.
        \label{line:construct_R}
        \State $\TIT \gets \rpcst(\R)$ \label{line:get_it_output}
        \label{line:recursive}
        \State $\cIT\gets c(\TIT)+\pi(\NVIT)$
        \label{line:costit}
        \State\Return $\T_X$ where $\cost_X$ is minimum among $X \in \{\GW, \ST, \IT\}$
        \label{line:return-min}
    \EndProcedure
  \end{algorithmic}
\end{algorithm}


\subsection{Analysis}
\label{sec:iterative_analysis}
For an arbitrary instance $\I = (\G, \Root, \pi)$ in PCST, our aim is to analyze the approximation factor achieved by Algorithm \ref{alg:rpcst}.
We compare the output of $\rpcst$ on $\I$ with an optimal solution $\OPT$ for the instance $\I$. 
We denote the tree selected in $\OPT$ as $\TOPT$. Then, the cost of $\OPT$ is given by $\cOPT=c(\TOPT)+\pi(\NVOPT)$.

We use an inductive approach to analyze the algorithm, where we focus on a single call of the algorithm and find upper bounds for each of our three solutions and a lower bound for the optimal solution $\OPT$. To find these lower and upper bounds, we make use of the coloring done by the \GW~algorithm on instance $\Ib$ and the values $\ys$, $\ysv$, and $\yv$ relating to this coloring process. In addition, we establish an upper bound for the solution obtained from the recursive call based on the induction hypothesis. 
In our inductive analysis, we only consider one individual call to the procedure at each time, to analyze either the induction base or the induction step. So, all the variables used in the analysis will relate to the algorithm's variables in the specific call we are analyzing. This includes the trees $\TGW$, $\TST$, and $\TIT$, and the live and dead vertices $\NDGW$ and $\DGW$.

We note that in our induction, we do not initially know the value of the approximation factor $\alf$ which we want to prove the algorithm achieves. Instead, we use $\alf$ as a variable in our inequalities, and this leads to a system of constraints involving $\alf$ that need to be satisfied for our induction to prove an $\alf$ approximation guarantee. 
These inequalities involve not only the approximation factor $\alf$ which we seek to find but also the parameter $\bta$ which defines the behavior of our algorithm. Throughout the analysis, we assume that $\bta\leq2$. 
We justify this assumption in Subsection \ref{sec:bta2} by showing that values of $\bta>2$ cannot lead to a better than $2$ approximation. 
To determine our approximation factor $\alf$, we consider the range $\p \leq \alf \leq 2$. 
This range is chosen because we cannot assume that our algorithm performs better than the Steiner tree algorithm, which we use as a component. 
Additionally, our solution is guaranteed to be at least as good as the $2$-approximation provided by the \GW~algorithm.

In the first step, we categorize non-root vertices based on the output of $\pcstgw(\Ib)$ and $\OPT$. This categorization helps us establish more precise bounds for the solutions by enabling a more tailored analysis within each category.

\begin{definition}
\label{def:tbl}
For an instance $\I$, $\OPT$ partition vertices into two sets: $\V(\TOPT)$ and $\NVOPT$. $\pcstgw(\Ib)$ also partitions vertices into two sets: $\NDGW$ and $\DGW$. We define four sets to categorize the vertices, excluding $\Root$, based on these two partitions:
\begin{align*}
    &\A = \V(\TOPT) \cap \NDGW - \RSET
    &\B = \V(\TOPT) \cap \DGW\\
    &\C = \NVOPT \cap \NDGW
    &\D = \NVOPT \cap \DGW\\
\end{align*}
\begin{table}[H]
\begin{center}
\begin{tabular}{|c|c|c|c|}
\cline{3-4}
\multicolumn{2}{c|}{} & \multicolumn{2}{c|}{\pcstgw(\Ib)} \\
\cline{3-4}
\multicolumn{2}{c|}{} & Live vertices\footnotemark & Dead vertices \\
\hline
\multirow{2}{*}{Optimal Solution} & Connected to \Root \footref{fnt1}
& \A & \B \\
\cline{2-4}
& Penalty paid& \C & \D \\
\hline
\end{tabular}
\caption{This table illustrates the categories of vertices.}
\end{center}
\end{table}
\footnotetext{\label{fnt1}excluding \Root.}

Using the coloring scheme of $\pcstgw(\Ib)$, we introduce the following values to represent the total duration of coloring with vertices in these sets.
\begin{align*}
    &\va = \sum_{v \in \A} \yv
    &\vb = \sum_{v \in \B} \yv\\
    &\vc = \sum_{v \in \C} \yv
    &\vd = \sum_{v \in \D} \yv\\
\end{align*}
\end{definition}

\begin{definition}[Connected and unconnected dead vertices]
\label{def:prime-doubleprime}
    For an instance $\I$, based on Definition~\ref{def:tbl}, the sets $\B$ and $\D$ represent dead vertices in the output of $\pcstgw(\Ib)$. 
    We further divide set $\B$ into $\Bp$ and $\Bz$, and set $\D$ into $\Dp$ and $\Dz$, based on whether they are connected to the root at the end of the $\pcstgw(\Ib)$ procedure.
    Let $\Bp$ and $\Dp$ be the subsets of $\B$ and $\D$, respectively, representing the vertices connected to the root. Similarly, $\Bz$ and $\Dz$ are the subsets of $\B$ and $\D$, respectively, indicating the vertices not connected to the root at the end of the procedure.
\begin{align*}
    &\Bp = \B \cap \V(\TGW) = \V(\TOPT) \cap \DGW \cap \V(\TGW)\\
    &\Bz = \B \cap \NVGW = \V(\TOPT) \cap \DGW \cap \NVGW\\
    &\Dp = \D \cap \V(\TGW) = \NVOPT \cap \DGW \cap \V(\TGW)\\
    &\Dz = \D \cap \NVGW) = \NVOPT \cap \DGW \cap \NVGW\\
\end{align*}
    Subsequently, we define $\vbp$, $\vbz$, $\vdp$, and $\vdz$ as the total duration of coloring with vertices in sets $\Bp$, $\Bz$, $\Dp$, and $\Dz$, respectively.
\begin{align*}
    &\vbp = \sum_{v \in \Bp} \yv
    &\vbz = \sum_{v \in \Bz} \yv\\
    &\vdp = \sum_{v \in \Dp} \yv
    &\vdz = \sum_{v \in \Dz} \yv\\
\end{align*}
It is trivial to see that $\vd=\vdp+\vdz$ as $\Dp\cup\Dz=\D$ and $\Dp\cap\Dz=\emptyset$. Similarly, $\vb=\vbp+\vbz$.
\end{definition}

\begin{definition}[Single-edge and multi-edge sets]
\label{def:single_multi_edge}
    For an instance $\I$, we call a set $S \subseteq \V$ a single-edge set if $|\delta(S) \cap \TOPT| = 1$ and a multi-edge set if $|\delta(S) \cap \TOPT| > 1$ (Illustrated in Figure \ref{fig:single-multi-cut}).
    We assign each moment of coloring with colors of vertices in $\B$ which are inside a single-edge set to $\vba$, and those in a multi-edge set to $\vbb$.
    These definitions are as follows:
\begin{align*}
    \vba =\sum_{v \in \B} \sum_{|\delta(S) \cap \TOPT| = 1} \ysv\\
    \vbb =\sum_{v \in \B} \sum_{|\delta(S) \cap \TOPT| > 1} \ysv
\end{align*}
    Note that $\vb = \vba + \vbb$, as every vertex in $\B$ is connected to $\Root$ in the optimal solution. 
    Therefore, with each moment of coloring involving vertices in $\B$, the corresponding active set cuts an edge belonging to the path from that vertex to $\Root$ in the optimal solution.
\begin{figure}
\begin{center}
\begin{tikzpicture}[scale=1.3]
\def\r{0.1}
\foreach \v/\x/\y/\dir/\l/\col in {0/0/0/above/$r$/black, 1/-1/-1/0//black, 2/1/-1/0//red, 3/-2.1/-1/below/Multi-edge/blue, 4/0/-2/0//black} {
    \node[label={[text=\col]\dir:\l}] (\v) at (\x, \y) {};
    \draw[\col, fill=\col] (\v) circle (\r);
};
\draw (1) -- (0) -- (2);
\draw (1) -- (4);
\def\col{blue}
\draw[\col, line width=3] (3) circle (1.6);
\draw[\col, line width=3] ($(1)!0.13!(3)$) -- ($(3)!0.13!(1)$);
\draw[\col, line width=3] (1) -- ($(1)!0.45!(0)$);
\draw[\col, line width=3] (1) -- ($(1)!0.45!(4)$);
\def\col{red}
\draw[\col, line width=3] (2) circle (0.7);
\draw[\col, line width=3] (2) -- ($(2)!0.5!(0)$);
\node[label={[text=\col]below:Single-edge}] at ($(2) - (0, 0.7)$) {};
\end{tikzpicture}
\end{center}
\caption{Illustration of single-edge set vs. multi-edge set in $\TOPT$. The red set is a single-edge set, but the blue one is a multi-edge set.}
\label{fig:single-multi-cut}
\end{figure}
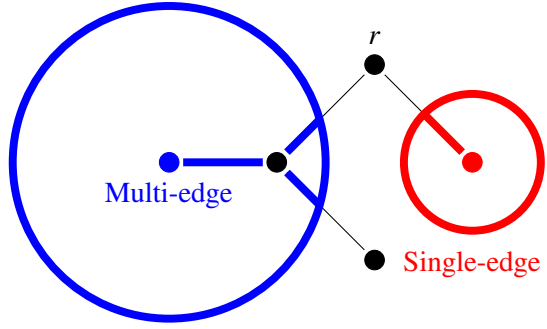
\end{definition}

\begin{lemma}
\label{lm:gwdead_beta}
     For any vertex $v \in \V$, we have $\bta \yv \leq \pi(v)$. Furthermore, if $v \in \B \cup \D$, which means is a dead vertex, we have $\bta \yv=\pi(v)$.
\end{lemma}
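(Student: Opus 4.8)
### Proof proposal for Lemma~\ref{lm:gwdead_beta}

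The plan is to reduce this statement directly to Lemma~\ref{lemma:gwdead} by a scaling argument. Recall that the values $\yv$ in this lemma refer to the coloring durations produced by the run of $\pcstgw$ on the scaled instance $\Ib = (\G, \Root, \pb)$, where $\pb(v) = \pi(v)/\bta$ for every vertex. Lemma~\ref{lemma:gwdead}, applied to the instance $\Ib$ (with penalty function $\pb$), states that $\yv \le \pb(v)$ for every $v \in \V$, and moreover $\yv = \pb(v)$ whenever $v$ is a dead vertex in this run of $\pcstgw(\Ib)$.

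First I would substitute $\pb(v) = \pi(v)/\bta$ into both parts of Lemma~\ref{lemma:gwdead}: the inequality $\yv \le \pb(v) = \pi(v)/\bta$ rearranges to $\bta\yv \le \pi(v)$, which is exactly the first claim; and for a dead vertex $v$, the equality $\yv = \pb(v) = \pi(v)/\bta$ rearranges to $\bta\yv = \pi(v)$, the second claim. The only remaining point is to verify that the set of dead vertices in the run of $\pcstgw(\Ib)$ is precisely $\DGW$, and that $\B \cup \D$ coincides with $\DGW$. The latter is immediate from Definition~\ref{def:tbl}: $\B = \V(\TOPT) \cap \DGW$ and $\D = \NVOPT \cap \DGW$, and since $\{\V(\TOPT), \NVOPT\}$ partitions $\V$, we get $\B \cup \D = \DGW$. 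The former holds because $\DGW$ is by definition exactly the set of dead vertices returned by $\pcstgw(\Ib)$ (see Line~\ref{line:get_gw_output} of Algorithm~\ref{alg:rpcst} and the definition of $\dead$ in Section~\ref{sec:gw}), so Lemma~\ref{lemma:gwdead}'s "dead vertex" hypothesis applies to exactly the vertices of $\B \cup \D$.

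There is essentially no obstacle here — the lemma is a bookkeeping restatement of Lemma~\ref{lemma:gwdead} under the penalty rescaling, and the main thing to be careful about is simply keeping track of which instance ($\I$ versus $\Ib$) each quantity lives on. I would write the proof in two or three sentences: invoke Lemma~\ref{lemma:gwdead} on $\Ib$, substitute $\pb = \pi/\bta$, and note $\B \cup \D = \DGW$ is the dead set of that run.
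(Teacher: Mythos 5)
Your proposal is correct and follows essentially the same route as the paper's proof: apply Lemma~\ref{lemma:gwdead} to the run of $\pcstgw$ on $\Ib$ and substitute $\pb(v)=\pi(v)/\bta$. The only addition is your explicit check that $\B\cup\D=\DGW$ is the dead set of that run, which the paper leaves implicit; this is a harmless and accurate piece of bookkeeping.
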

\begin{proof}
    Since we run $\pcstgw$ on $\pb$ in Line~\ref{line:get_gw_output}, we can use Lemma~\ref{lemma:gwdead} using penalties $\pb$.
    That means, for any vertex $v \in \V$, we have $\yv \leq \pb(v)$, and if $v$ is a dead vertex, we have $\yv=\pb(v)$.
    Since in Line~\ref{line:set_pb}, we set $\pb(v) = \dfrac{\pi(v)}{\bta}$, we can conclude the lemma.
\end{proof}

Now for a given instance $\I$, we derive lower bounds on the optimal solution using terms defined earlier. We use a similar approach that is used in~\cite{DBLP:journals/corr/abs-2309-05172} to bound the optimal solution.

\begin{lemma}
\label{lm:opt_basic_bound}
    We can bound the cost of the optimal solution in terms of the cost of its tree as follows:
    $$\cOPT \ge c(\TOPT) + \bta \vc + \bta \vd \text{.}$$
\end{lemma}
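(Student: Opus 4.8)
The plan is to decompose the optimal cost $\cOPT = c(\TOPT) + \pi(\NVOPT)$ into its two natural parts and bound the penalty part from below. Recall that $\NVOPT$, the set of vertices whose penalty is paid by $\OPT$, is by Definition~\ref{def:tbl} exactly the disjoint union $\C \cup \D$ (the live-and-unconnected vertices together with the dead-and-unconnected ones). Hence $\pi(\NVOPT) = \pi(\C) + \pi(\D) = \sum_{v\in\C}\pi(v) + \sum_{v\in\D}\pi(v)$, and it suffices to show $\sum_{v\in\C}\pi(v) \ge \bta\vc$ and $\sum_{v\in\D}\pi(v) \ge \bta\vd$, since then $\cOPT = c(\TOPT) + \pi(\NVOPT) \ge c(\TOPT) + \bta\vc + \bta\vd$.

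For the second inequality I would invoke Lemma~\ref{lm:gwdead_beta}: every vertex $v\in\D$ is a dead vertex (as $\D\subseteq\DGW$), so $\bta\yv = \pi(v)$ exactly, and summing over $v\in\D$ gives $\sum_{v\in\D}\pi(v) = \bta\sum_{v\in\D}\yv = \bta\vd$ — in fact with equality. For the first inequality, the vertices in $\C$ are live, not dead, so we only get the weaker bound $\bta\yv \le \pi(v)$ from the first part of Lemma~\ref{lm:gwdead_beta}; summing over $v\in\C$ yields $\sum_{v\in\C}\pi(v) \ge \bta\sum_{v\in\C}\yv = \bta\vc$. Combining the two bounds with the $c(\TOPT)$ term gives the claim.

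I don't anticipate a serious obstacle here — the lemma is essentially an immediate consequence of Lemma~\ref{lm:gwdead_beta} plus the observation that $\NVOPT = \C \sqcup \D$. The only point requiring a moment's care is making sure $\Root \notin \NVOPT$ (so the infinite penalty of $\Root$ never enters the sum), which holds because $\OPT$ always connects $\Root$ (indeed $\pi(\Root)=\infty$), so $\Root \in \V(\TOPT)$ and $\Root$ lies in $\A$ rather than in $\C$ or $\D$. With that noted, the proof is just the two-line summation argument above.

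\begin{proof}
By Definition~\ref{def:tbl}, the set of vertices whose penalty is paid in $\OPT$ is $\NVOPT = \C \cup \D$, and this union is disjoint since $\C \subseteq \NDGW$ while $\D \subseteq \DGW$. Note also that $\Root \notin \NVOPT$, as $\pi(\Root)=\infty$ forces $\Root\in\V(\TOPT)$. Therefore
\begin{align*}
    \pi(\NVOPT) = \sum_{v\in\C}\pi(v) + \sum_{v\in\D}\pi(v)\text{.}
\end{align*}
For $v\in\C$, Lemma~\ref{lm:gwdead_beta} gives $\bta\yv \le \pi(v)$, so $\sum_{v\in\C}\pi(v) \ge \bta\sum_{v\in\C}\yv = \bta\vc$. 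For $v\in\D$, every such vertex is dead (since $\D\subseteq\DGW$), so Lemma~\ref{lm:gwdead_beta} gives $\bta\yv = \pi(v)$, hence $\sum_{v\in\D}\pi(v) = \bta\sum_{v\in\D}\yv = \bta\vd$. Combining,
\begin{align*}
    \cOPT = c(\TOPT) + \pi(\NVOPT) \ge c(\TOPT) + \bta\vc + \bta\vd\text{,}
\end{align*}
as desired.
\end{proof}
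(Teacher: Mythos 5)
Your proof is correct and takes essentially the same route as the paper: decompose $\cOPT$ into $c(\TOPT) + \pi(\NVOPT)$, use $\NVOPT = \C \cup \D$, and apply Lemma~\ref{lm:gwdead_beta} termwise. The extra observations (equality for $\D$, $\Root\notin\NVOPT$) are fine but not needed.
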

\begin{proof}
    According to the definition of cost in PCST, we can determine the cost of the optimal solution by separately calculating the weight of its tree and the penalties it pays. Additionally, based on Definition \ref{def:tbl}, we have $\NVOPT = \C \cup \D$. Utilizing these two observations, we can establish an upper bound for $\cOPT$ as follows:
\begin{align*}
    \cOPT &= c(\TOPT) + \pi(\NVOPT)\\
    &= c(\TOPT) + \sum_{v \in \NVOPT} \pi(v)\\
    &= c(\TOPT) + \sum_{v \in \C} \pi(v) + \sum_{v \in \D} \pi(v)\tag{$\C \cap \D = \emptyset$}\\
    &\ge c(\TOPT) + \sum_{v \in \C} \bta\yv + \sum_{v \in \D} \bta\yv\tag{Lemma \ref{lm:gwdead_beta}}\\
    &= c(\TOPT) + \bta\vc + \bta\vd \tag{Definition \ref{def:tbl}} 
\end{align*}
\end{proof}

Based on Lemma \ref{lm:opt_basic_bound}, we can easily conclude the following corollary which bounds the weight of the optimal solution tree using the cost of the optimal solution.

\begin{corollary}
\label{col:bound_T_OPT}
    We can bound the cost of optimal solution's tree as follows:
    $$c(\TOPT) \le \cOPT - \bta \vc - \bta \vd\text{.}$$
\end{corollary}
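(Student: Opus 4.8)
The plan is to read this off directly from Lemma~\ref{lm:opt_basic_bound}, with no additional work beyond a one-line algebraic rearrangement. Lemma~\ref{lm:opt_basic_bound} already gives $\cOPT \ge c(\TOPT) + \bta\vc + \bta\vd$. Subtracting $\bta\vc + \bta\vd$ from both sides of this inequality yields $c(\TOPT) \le \cOPT - \bta\vc - \bta\vd$, which is exactly the claimed bound. So the entire proof amounts to transposing two terms.

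There is essentially no obstacle here: the rearrangement is valid without any sign hypotheses, since we are only moving terms across an inequality rather than multiplying or dividing by a possibly-negative quantity. (If one wants the informal reading ``$c(\TOPT)$ is at most $\cOPT$,'' one additionally uses $\vc,\vd \ge 0$ and $\bta \ge 0$, but that monotonicity remark is not needed to prove the corollary as stated.) The purpose of isolating $c(\TOPT)$ in this form is purely presentational — it will be the convenient shape in which to plug the optimal tree cost into the later upper bounds on $\cGW$, $\cST$, and $\cIT$ during the weighted-average argument — so I would simply state it as an immediate consequence of the preceding lemma and move on.
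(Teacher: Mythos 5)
Your proposal is correct and matches the paper exactly: the paper presents this corollary as an immediate rearrangement of Lemma~\ref{lm:opt_basic_bound}, with no separate proof given. Nothing further is needed.
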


Now we use Lemma \ref{lm:opt_basic_bound}, to expand the bound of the optimal solution.

\begin{lemma}
\label{lm:opt_main_bound}
    We can establish a lower bound for the optimal solution as follows:
    $$\cOPT \ge \va + \vba + 2\vbb + \bta \vc + \bta \vd$$
\end{lemma}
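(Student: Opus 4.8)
The plan is to start from the bound $\cOPT \ge c(\TOPT) + \bta\vc + \bta\vd$ of Lemma~\ref{lm:opt_basic_bound} and replace $c(\TOPT)$ by a lower bound of the form $\va + \vba + 2\vbb$. The natural tool is a primal–dual / moat-counting argument on the coloring process run by $\pcstgw$ on $\Ib$: every edge of $\TOPT$ is fully paid for by the colors it receives, so $c(\TOPT)$ is at least the total amount of coloring that active sets put on edges of $\TOPT$. Concretely, for each active set $\SET$ and each moment of the process, $\SET$ contributes $\Delta$ to each of the $|\delta(\SET)\cap\TOPT|$ edges of $\TOPT$ it cuts, so integrating over time gives
\begin{align*}
c(\TOPT) \ge \sum_{\SET} |\delta(\SET)\cap\TOPT|\cdot\ys .
\end{align*}

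Next I would decompose $\ys = \sum_{v\in\SET}\ysv$ and reorganize the sum by the color $v$ being used. A color $v\in\A$ lies in $\V(\TOPT)-\RSET$, so any active set containing $v$ must cut at least one edge of $\TOPT$ (the first edge on the $v$–$\Root$ path in $\TOPT$); this contributes at least $\sum_{v\in\A}\yv = \va$. A color $v\in\B$ also lies in $\V(\TOPT)$, so the containing set cuts $\ge 1$ edge if it is a single-edge set and $\ge 2$ edges if it is a multi-edge set — and this is exactly how $\vba$ and $\vbb$ were defined in Definition~\ref{def:single_multi_edge}, contributing $\ge \vba + 2\vbb$. Colors of vertices in $\C\cup\D$ (the penalty-paid vertices) I would simply discard from this part of the bound, since their penalty contribution is already captured by the $\bta\vc + \bta\vd$ term coming from Lemma~\ref{lm:opt_basic_bound}; there is no double counting because that term used $\pi(v)$, not the edge-coloring. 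Putting these together yields $c(\TOPT) \ge \va + \vba + 2\vbb$, and adding $\bta\vc+\bta\vd$ via Lemma~\ref{lm:opt_basic_bound} gives the claim.

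I expect the main subtlety to be the bookkeeping that guarantees no moat/color is counted more than it should be: one must be careful that a single active set $\SET$ at a single instant colors with exactly one vertex's color (by the fixed-ordering convention in Definition~\ref{def:coldur}), so the contributions indexed by $(\SET, v, \text{moment})$ are disjoint, and that the multiplicity $|\delta(\SET)\cap\TOPT|$ is attributed consistently — for $v\in\B$ the "$2$" in $2\vbb$ must come from genuinely distinct edges of $\TOPT$ cut by the same set, which holds since a multi-edge set by definition cuts $>1$ edge, i.e.\ $\ge 2$. A secondary point worth stating carefully is why active sets cutting edges of $\TOPT$ is forced for colors in $\A\cup\B$: because such a vertex is in $\V(\TOPT)$ and $\TOPT$ is connected to $\Root$, while the set $\SET$ (being a proper active set not yet merged to $\Root$, as it still has a live color $v\ne\Root$) cannot contain $\Root$, hence $\delta(\SET)\cap\TOPT\neq\emptyset$. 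Modulo these care points, the inequality is a direct consequence of Lemma~\ref{lm:opt_basic_bound} together with the moat-counting estimate on $c(\TOPT)$.
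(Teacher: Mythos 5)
Your proposal is correct and follows essentially the same route as the paper: it lower-bounds $c(\TOPT)$ by $\sum_{\SET}|\delta(\SET)\cap\TOPT|\cdot\ys$, reorganizes by the color $v$ via $\ys=\sum_{v\in\SET}\ysv$, uses that a set with $\ysv>0$ for $v\in\A\cup\B$ cannot contain $\Root$ and hence cuts at least one (or, for multi-edge sets, at least two) edges of $\TOPT$, and then adds the penalty terms from Lemma~\ref{lm:opt_basic_bound}. The care points you flag (disjointness of the $(\SET,v)$ contributions and why $\delta(\SET)\cap\TOPT\neq\emptyset$) are exactly the ones the paper's proof addresses.
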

\begin{proof}
First, we demonstrate that $\va + b_1 + 2b_2$ is a lower bound for $c(\TOPT)$. To achieve this, for any set $\SET$, we define $\dopt(\SET)$ as the number of edges of $\TOPT$ that are colored by $\SET$.
Given that each portion of an edge will be colored at most once, and each set $\SET \subseteq \V$ colors $\dopt(\SET)\cdot\ys$ of the optimal solution, we can derive a lower bound for $\cc(\TOPT)$ based on the proportion of the colored edges in $\TOPT$.
\begin{align*}
    \cc(\TOPT) &\ge \sum_{\SET \subseteq \V} \dopt(\SET)\cdot\ys\\
    &= \sum_{\SET \subseteq \V} \sum_{v \in S} \dopt(S)\cdot\ysv \tag{$\ys = \sum_{v \in S}\ysv$}\\ 
    &= \sum_{\substack{v \in \V}} \sum_{\substack{\SET \subseteq \V\\ v \in S}}  \dopt(S)\cdot\ysv \tag{change the order of summations}\\
    &\ge \sum_{\substack{v \in \A \cup \B}} \sum_{\substack{\SET \subseteq \V\\v \in S}}  \dopt(S)\cdot\ysv \tag{$\A \cap \B = \emptyset$, $\A\cup\B \subseteq \V$}\\
    &\ge\sum_{v\in\A} \sum_{\substack{\SET \subseteq \V\\v \in S}} \dopt(S)\cdot\ysv + \sum_{v\in\B} \sum_{\substack{\SET \subseteq \V\\v \in S}} \dopt(S)\cdot\ysv\text{.} 
\end{align*}
Furthermore, for any vertex $v$ in $\A$ or $\B$, based on Definition~\ref{def:tbl}, there exists a path from $v$ to $\Root$ in $\TOPT$.
Also, for every set $S \subseteq \V$ where $\ysv > 0$, we know $\Root \notin S$ otherwise all coloring of set $S$ would be assigned to $\Root$.
Using these two observations, we can infer that at least one edge of $\TOPT$ is colored by $S$, resulting in $\dopt(S) \ge 1$.

For vertices in $\A$, we have:
\begin{align*}
    \sum_{v\in\A} \sum_{\substack{\SET \subseteq \V\\v \in S}} \dopt(S)\cdot\ysv &\ge 
    \sum_{v\in\A} \sum_{\substack{\SET \subseteq \V\\v \in S}} \ysv \tag{$\dopt(S)\geq 1$ when $\ysv>0$}\\
    &= \sum_{v\in \A} \yv \tag{Definition \ref{def:coldur}}\\
    &= \va \tag{Definition \ref{def:tbl}}.
\end{align*}

For vertices in $\B$, we have:
\begin{align*}
    \sum_{v\in\B} \sum_{\substack{\SET \subseteq \V\\v \in S}} \dopt(S)\cdot\ysv &=\sum_{v\in\B} \sum_{\substack{\SET \subseteq \V\\v \in S\\ \dopt(S)=1}} \dopt(S)\cdot\ysv + \sum_{v\in\B} \sum_{\substack{\SET \subseteq \V\\v \in S\\ \dopt(S)>1}} \dopt(S)\cdot\ysv\tag{$\dopt(S)\geq 1$ when $\ysv>0$}\\
    &\ge\sum_{v \in \B} \sum_{\substack{\SET \subseteq \V\\v \in S\\ \dopt(S)=1}} \ysv + \sum_{v \in \B} \sum_{\substack{\SET \subseteq \V\\v \in S\\ \dopt(S)>1}} 2\ysv \\
    &= \vba + 2\vbb\text{.} \tag{Definition \ref{def:single_multi_edge}}
\end{align*}
Combining all together, we obtain:
 \begin{align*}
    \cc(\TOPT) &\ge \va + \vba + 2\vbb\text{.}
\end{align*}
By using this bound along with Lemma \ref{lm:opt_basic_bound}, we can bound $\cOPT$.
\begin{align*}
    \cOPT &\ge c(\TOPT) + \bta \vc + \bta \vd \tag{Lemma \ref{lm:opt_basic_bound}}\\
    &\ge \va + \vba + 2\vbb + \bta \vc + \bta \vd\text{.} \qedhere
\end{align*}
\end{proof}
Next, we bound the $\GW$ solution.
\begin{lemma}
\label{lm:gw_basic_bound}
The following bound holds for the cost of the solution returned by the output of $\pcstgw(\Ib)$ for instance $\I$:
    $$\cGW \le 2\va + 2\vb + 2\vc + 2\vd\text{.}$$
\end{lemma}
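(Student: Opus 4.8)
The plan is to bound $\cGW = c(\TGW) + \pi(\NVGW)$ by invoking Lemma~\ref{lemma:gwbound} with the penalty function $\pb$ (since $\pcstgw$ is run on $\Ib$ in Line~\ref{line:get_gw_output}), and then translating everything back into the $\yv$ values and the four categories $\A,\B,\C,\D$ from Definition~\ref{def:tbl}. Concretely, Lemma~\ref{lemma:gwbound} gives
\[
\cGW = c(\TGW) + \pb(\NVGW)\cdot\text{(? penalties)} \le 2\sum_{v\in\V(\TGW)-\RSET}\yv + \sum_{v\notin\V(\TGW)}\yv,
\]
but here one has to be careful: in Algorithm~\ref{alg:rpcst} the cost $\cGW$ is computed with the \emph{original} penalty $\pi$, not $\pb$, so I would first note $\pi(\NVGW) = \bta\,\pb(\NVGW)$. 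However, by Lemma~\ref{lm:gwdead_beta}, every vertex outside $\TGW$ is dead (Lemma~\ref{lemma:gwlive}) and hence satisfies $\bta\yv = \pi(v)$, i.e.\ $\pi(v) = \bta\yv \le 2\yv$ using $\bta\le 2$. So the penalty term $\pi(\NVGW) = \sum_{v\notin\V(\TGW)}\bta\yv \le 2\sum_{v\notin\V(\TGW)}\yv$, which already matches the shape we want.

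The next step is to combine the edge-cost bound $c(\TGW)\le 2\sum_{v\in\V(\TGW)-\RSET}\yv$ from Lemma~\ref{lemma:gwtree} with the penalty bound above, obtaining
\[
\cGW \le 2\sum_{v\in\V(\TGW)-\RSET}\yv + 2\sum_{v\notin\V(\TGW)}\yv = 2\sum_{v\in\V-\RSET}\yv.
\]
Then I would partition $\V - \RSET$ into the four sets $\A,\B,\C,\D$ (note $\A$ already excludes $\Root$, and $\Root$ is live and in $\TOPT$ so it lies in none of $\B,\C,\D$), so that $\sum_{v\in\V-\RSET}\yv = \va + \vb + \vc + \vd$ by the definitions of these quantities in Definition~\ref{def:tbl}. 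This yields $\cGW \le 2\va + 2\vb + 2\vc + 2\vd$ as claimed.

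The only subtlety — and the step I'd be most careful about — is the bookkeeping around which penalty function is in play: $\pcstgw$ runs on $\pb$, so the $\yv$ values and the dead/live classification are with respect to $\pb$, while $\cGW$ as defined in the algorithm uses $\pi$. The factor of $\bta$ must be reinserted exactly once (via $\pi(v)=\bta\yv$ for dead vertices), and then absorbed into the factor $2$ using $\bta\le 2$; getting this accounting right is the crux, though it is routine once set up. I'd also want to double-check that $\Root$ contributes nothing: it is never outside $\TGW$ and its exclusion from the sums is consistent with Lemma~\ref{lemma:gwtree}, so no $\infty$ penalty ever appears. Everything else is a direct application of the already-established Lemmas~\ref{lemma:gwtree}, \ref{lemma:gwlive}, and~\ref{lm:gwdead_beta}.
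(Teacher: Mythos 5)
Your proof is correct and follows essentially the same route as the paper: Lemma~\ref{lemma:gwtree} for the tree cost, Lemmas~\ref{lemma:gwlive} and~\ref{lm:gwdead_beta} for the penalty term, and $\bta\le 2$ to absorb the factor $\bta$. The only (harmless) difference is that you sum directly over $\V-\RSET=\A\cup\B\cup\C\cup\D$, whereas the paper routes the argument through the finer split $\B=\Bp\cup\Bz$, $\D=\Dp\cup\Dz$ before recombining — your version is marginally more streamlined but not a different argument.
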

\begin{proof}
    According to Line \ref{line:costgw} of Algorithm \ref{alg:rpcst}, we have
\begin{align*}
     \cGW = c(\TGW)+\pi(\NVGW)
\end{align*}
    To start, based on Definition \ref{def:tbl} we have
\begin{align*}
    \A \cup \C = (\V(\TOPT) \cap \NDGW - \RSET) \cup (\NVOPT \cap \NDGW)
    = \NDGW - \RSET\text{,}
\end{align*}
    and based on Definition \ref{def:prime-doubleprime} we have
\begin{align*}
    \Bp \cup \Dp = (\V(\TOPT) \cap \DGW \cap \V(\TGW)) \cup (\NVOPT \cap \DGW \cap \V(\TGW)) = \DGW \cap \V(\TGW)\text{.}
\end{align*}
Then, we can combine them to obtain
\begin{align*}
    \A \cup \C \cup \Bp \cup \Dp 
    &= (\NDGW - \RSET) \cup (\DGW \cap \V(\TGW))\\
    &= ((\NDGW \cap \V(\TGW)) - \RSET) \cup (\DGW \cap \V(\TGW))\tag{$\NDGW \subseteq \V(\TGW)$}\\
    &= \V(\TGW) - \RSET\text{.}
\end{align*}
    Applying this observation to Lemma \ref{lemma:gwtree} results in a bound for $c(\TGW)$.
\begin{align*}
    \cc(\TGW) &\le 2\cdot\sum_{\substack{v \in \V(\TGW)-\RSET}}\yv\\
    &= 2\cdot\sum_{v \in \A \cup \C \cup \Bp \cup \Dp}\yv\\
    &\le 2\cdot\sum_{v \in \A} \yv
    + 2\cdot\sum_{v \in \C} \yv
    + 2\cdot\sum_{v \in \Bp} \yv
    + 2\cdot\sum_{v \in \Dp} \yv\\
    &= 2\va + 2\vc + 2\vbp + 2\vdp
    \text{.} \tag{Definitions \ref{def:tbl} and \ref{def:prime-doubleprime}}
\end{align*}
    Additionally, in GW, we pay penalties for the vertices that are not connected to the $\Root$, all of which are dead according to Lemma \ref{lemma:gwlive}. Consequently, we can deduce that:
\begin{align*}
     \pi(\NVGW) &= \sum_{v \notin \V(\TGW)} \pi(v)\\
     &= \sum_{v \notin \V(\TGW)} \bta\yv\tag{Lemma \ref{lm:gwdead_beta}}\\
     &= \sum_{v \in \Bz} \bta\yv + \sum_{v \in \Dz} \bta\yv
     = \bta\vbz + \bta\vdz\tag{Definition \ref{def:prime-doubleprime}}\text{.}
\end{align*}
It is worth emphasizing that throughout the algorithm, we assume $\bta \leq 2$.
In conclusion, we can establish an upper bound for $\cGW$.
\begin{align*}
    \cGW &= c(\TGW)+\pi(\NVGW)\\
    & \le 2\va + 2\vc + 2\vbp + 2\vdp + \bta\vbz + \bta\vdz\\
    & \le 2\va + 2\vc + 2\vbp + 2\vbz +  2\vdp + 2\vdz\tag{$\bta \le 2$}\\
    & = 2\va + 2\vb + 2\vc + 2\vd\tag{Definition \ref{def:prime-doubleprime}}\text{.}
\end{align*}
\end{proof}

We restate this upper bound in terms of the variable $\alf$ and the cost of the optimal solution $\cOPT$ using Lemma \ref{lm:opt_main_bound}.
\begin{lemma}
The following bound holds for the cost of the solution returned by the output of $\pcstgw(\Ib)$ for instance $\I$:
    $$\cGW \le \alf \cdot \cOPT + (2-\alf) \va + (2-\alf) \vba + (2 - 2\alf) \vbb + (2 - \alf\bta) \vc + (2 - \alf \bta) \vd\text{.}$$
\end{lemma}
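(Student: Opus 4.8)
The plan is to combine the two bounds already established: the combinatorial upper bound on $\cGW$ from Lemma~\ref{lm:gw_basic_bound}, namely
$$\cGW \le 2\va + 2\vb + 2\vc + 2\vd\text{,}$$
and the lower bound on the optimal solution from Lemma~\ref{lm:opt_main_bound}, namely
$$\cOPT \ge \va + \vba + 2\vbb + \bta \vc + \bta \vd\text{.}$$
The idea is to write $\cGW = \alf\cdot\cOPT + \bigl(\cGW - \alf\cdot\cOPT\bigr)$, replace $\cGW$ by its upper bound and $\alf\cdot\cOPT$ by $\alf$ times its lower bound (valid since $\alf \ge 0$, so multiplying the lower bound by $\alf$ preserves direction), and collect terms grouped by the coloring quantities $\va, \vba, \vbb, \vc, \vd$.

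Concretely, first I would record $\vb = \vba + \vbb$ (noted in Definition~\ref{def:single_multi_edge}), so that the $\GW$ bound becomes $\cGW \le 2\va + 2\vba + 2\vbb + 2\vc + 2\vd$. Then I would chain:
\begin{align*}
\cGW &\le 2\va + 2\vba + 2\vbb + 2\vc + 2\vd\\
&= \alf\bigl(\va + \vba + 2\vbb + \bta\vc + \bta\vd\bigr) + (2-\alf)\va + (2-\alf)\vba + (2-2\alf)\vbb + (2-\alf\bta)\vc + (2-\alf\bta)\vd\\
&\le \alf\cdot\cOPT + (2-\alf)\va + (2-\alf)\vba + (2-2\alf)\vbb + (2-\alf\bta)\vc + (2-\alf\bta)\vd\text{,}
\end{align*}
where the last inequality uses Lemma~\ref{lm:opt_main_bound} together with $\alf\ge 0$. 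The algebraic identity in the middle line is a routine rearrangement: the coefficient of $\va$ is $\alf + (2-\alf) = 2$, of $\vba$ is $\alf + (2-\alf) = 2$, of $\vbb$ is $2\alf + (2-2\alf) = 2$, of $\vc$ is $\alf\bta + (2-\alf\bta) = 2$, and of $\vd$ likewise $2$, matching the $\GW$ bound exactly.

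There is essentially no obstacle here — the only things to be careful about are that all the coloring quantities $\va, \vba, \vbb, \vc, \vd$ are nonnegative (so that substituting a lower bound into a term with a possibly-negative coefficient is handled correctly; but note in this lemma we substitute the $\OPT$ lower bound only into the single grouped term $\alf(\va+\vba+2\vbb+\bta\vc+\bta\vd)$ whose coefficient $\alf$ is positive, so direction is clearly preserved), and that $\alf \ge 0$. Both hold since $\alf \ge \p > 0$ and each $\yv \ge 0$. Hence the proof is a two-line substitution plus a one-line check of the arithmetic identity; I would present it as such without belaboring the coefficient bookkeeping.
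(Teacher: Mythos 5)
Your proof is correct and matches the paper's argument: both start from the bound $\cGW \le 2\va + 2\vb + 2\vc + 2\vd$ of Lemma~\ref{lm:gw_basic_bound}, use $\vb = \vba + \vbb$, and then invoke Lemma~\ref{lm:opt_main_bound} together with $\alf \ge 0$ to trade the grouped term $\alf(\va + \vba + 2\vbb + \bta\vc + \bta\vd)$ for $\alf\cdot\cOPT$. The paper phrases this as adding the nonnegative quantity $\alf\cdot(\cOPT - \va - \vba - 2\vbb - \bta\vc - \bta\vd)$ to the upper bound, which is the same rearrangement you perform.
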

\begin{proof}
We can directly apply Lemma \ref{lm:opt_main_bound} to the previous bound obtained in the preceding Lemma \ref{lm:gw_basic_bound}.
    \begin{align*}
        \cGW &\le 2\va + 2\vb + 2\vc + 2\vd \tag{Lemma \ref{lm:gw_basic_bound}}\\
        &\le 2\va + 2(\vba + \vbb) + 2\vc + 2\vd + \alpha \cdot (\cOPT - \va - \vba - 2\vbb - \bta \vc - \bta \vd) \tag{Lemma \ref{lm:opt_main_bound}}\\
        &\le \alf \cdot \cOPT + (2-\alf) \va + (2-\alf) \vba + (2 - 2\alf) \vbb + (2 - \alf\bta) \vc + (2 - \alf \bta) \vd\text{.} 
    \end{align*}
\end{proof}

Next, we bound the cost of the \ST~solution.
For a set $\SET$, let $\T_{\OPTp_\SET}$ denote the minimum cost Steiner tree on this set. In the following lemma, we relate the cost of the $\ST$ solution to the cost of $\T_{\OPTp_\NDGW}$. 
\begin{lemma}
\label{lm:st_basic_bound}
For instance $\I$, we can bound the cost of the solution returned by the output of $\ST$ as follows:
    $$\cST \le \p  \cdot c(\T_{\OPTp_{\NDGW}}) + \bta \vb + \bta \vd\text{.}$$
\end{lemma}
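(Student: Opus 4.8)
The plan is to bound $\cST = c(\TST) + \pi(\NVST)$ by handling the tree cost and the penalty term separately. For the tree, since $\TST$ is produced by $\steinertree$ on terminal set $\NDGW$, it is a $\p$-approximation of the minimum Steiner tree on $\NDGW$, so $c(\TST) \le \p \cdot c(\T_{\OPTp_{\NDGW}})$ immediately. For the penalty term, note that every vertex outside $\TST$ must be a terminal that was \emph{not} required to be connected, i.e.\ a vertex of $\DGW = \B \cup \D$; and in fact the penalty we pay in the $\ST$ solution is at most $\pi(\DGW) = \pi(\B) + \pi(\D)$, since $\NDGW \subseteq \V(\TST)$ means the only vertices possibly outside $\TST$ are those in $\DGW$. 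Then applying Lemma~\ref{lm:gwdead_beta}, which gives $\pi(v) = \bta \yv$ for every dead vertex $v \in \B \cup \D$, we get $\pi(\NVGW\text{-type term}) \le \pi(\B) + \pi(\D) = \bta\vb + \bta\vd$ by the definitions of $\vb$ and $\vd$.

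Concretely, the steps I would carry out in order are: first, observe $\NVST \subseteq \DGW$ and hence $\pi(\NVST) \le \pi(\DGW)$; second, decompose $\DGW = \B \sqcup \D$ (disjoint by Definitions~\ref{def:tbl}) and write $\pi(\DGW) = \pi(\B) + \pi(\D)$; third, invoke Lemma~\ref{lm:gwdead_beta} to replace each $\pi(v)$ with $\bta\yv$, yielding $\pi(\B) = \sum_{v\in\B}\bta\yv = \bta\vb$ and similarly $\pi(\D) = \bta\vd$; fourth, bound the tree cost via the $\p$-approximation guarantee of $\steinertree$; fifth, sum the two bounds to conclude $\cST \le \p\cdot c(\T_{\OPTp_{\NDGW}}) + \bta\vb + \bta\vd$.

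The only mildly delicate point is justifying that $\NVST \subseteq \DGW$: this needs that $\steinertree$ returns a tree spanning all of $\NDGW$, so no live vertex is ever left unconnected, and that every vertex of $\G$ is either live (in $\NDGW$) or dead (in $\DGW$) — a partition of $\V$ by construction. Since $\NDGW \coloneqq \V \setminus \DGW$ by definition in the algorithm, $\overline{\V(\TST)} \subseteq \V \setminus \NDGW = \DGW$ holds automatically. I do not anticipate a genuine obstacle here; the main thing to be careful about is not conflating the Steiner-tree optimum $\T_{\OPTp_{\NDGW}}$ with the PCST optimum $\TOPT$ — that relationship is developed in a later lemma, and this lemma deliberately states the bound in terms of $c(\T_{\OPTp_{\NDGW}})$ only.
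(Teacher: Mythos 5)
Your proposal is correct and follows essentially the same route as the paper's proof: bound the tree cost by the $\p$-approximation guarantee of $\steinertree$, observe $\NVST \subseteq \DGW = \B \cup \D$, and convert $\pi(\B \cup \D)$ to $\bta\vb + \bta\vd$ via Lemma~\ref{lm:gwdead_beta}. No gaps; the delicate point you flag ($\NVST \subseteq \DGW$) is handled identically in the paper.
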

\begin{proof}
    Since in $\TST$, we are connecting every vertex in $\NDGW$ to $\Root$, using an Steiner tree algorithm with an approximation factor of $\p$, the cost of the tree $\TST$ can be bounded by 
    $$c(\TST) \le \p \cdot c(\T_{\OPTp_{\NDGW}})\text{.}$$

    Moreover, as all vertices in $\NDGW$ are connected to $\Root$, the vertices for which we need to pay penalties for this solution form a subset of $\DGW$, i.e., $\NVST \subseteq \DGW$.
    Furthermore, by Definition \ref{def:tbl} we have:
\begin{align*}
    \B \cup \D 
    &= (\V(\TOPT) \cap \DGW) \cup (\NVOPT \cap \DGW)\tag{Definition \ref{def:tbl}}\\
    &= \DGW
\end{align*}
    Now, we can bound the penalty paid by the $\ST$ solution.
\begin{align*}
    \pi(\NVST) &\le \pi(\DGW)\\
    &= \pi(\B \cup \D)\\
    &= \sum_{v \in \B \cup \D} \pi(v)\\
    &= \sum_{v \in \B \cup \D} \bta\yv\tag{Lemma~\ref{lm:gwdead_beta}}\\
    &\le \sum_{v \in \B} \bta\yv + \sum_{v \in \D} \bta\yv \\
    &= \bta \vb + \bta \vd \tag{Definition~\ref{def:tbl}}
\end{align*}
    Finally, we use these bounds to complete the proof
    $$\cST = c(\TST) + \pi(\NVST) \le \p \cdot c(\T_{\OPTp_{\NDGW}}) + \bta \vb + \bta \vd$$
\end{proof}

We now provide an upper bound for the cost of $\T_{\OPTp_{\NDGW}}$ based on the cost of $\TOPT$ to obtain our main upper bound for \ST.
\begin{lemma}
\label{lm:add_C_to_AB_tree}
For the minimum cost Steiner tree $\T_{\OPTp_{\NDGW}}$ on $\NDGW$, we have
    $$c(\T_{\OPTp_{\NDGW}}) \le c(\TOPT) + 2 \vc + 2 \vd.$$
\end{lemma}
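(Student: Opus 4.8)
The goal is to bound the cost of the optimal Steiner tree on the live vertices $\NDGW$ by the cost of $\TOPT$ plus correction terms $2\vc + 2\vd$. The natural strategy is to build a feasible Steiner tree on $\NDGW$ by taking $\TOPT$ and augmenting it. Recall that $\TOPT$ already connects all of $\V(\TOPT)$, which by Definition~\ref{def:tbl} contains $\A \cup \B$, i.e., all live vertices that $\OPT$ connects plus the dead vertices $\B$. The live vertices $\NDGW = \A \cup \C \cup \RSET$; so the live vertices missing from $\TOPT$ are exactly those in $\C = \NVOPT \cap \NDGW$, the vertices for which $\OPT$ pays a penalty but which are live in $\pcstgw(\Ib)$. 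So I need to attach the vertices of $\C$ (and possibly route through $\D$) to $\TOPT$ at a cost of at most $2\vc + 2\vd$.

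The plan is to use the GW coloring structure on instance $\Ib$ to pay for these attachments. Here is where I would invoke Lemma~\ref{lem:gwlocal}: consider the modified instance $\I'$ obtained from $\Ib$ by adding zero-weight edges from $\Root$ to every vertex of $\V(\TOPT)$ (all of which are already connected in $\OPT$, so morally ``free''). Actually, more precisely: I would consider the graph where we contract $\TOPT$ to the root, or equivalently add $0$-weight edges from $\Root$ to all vertices in $\V(\TOPT)$. On this modified instance, running GW, the vertices in $\V(\TOPT)$ get coloring duration $0$ and the dead set only shrinks. The remaining coloring done by vertices in $\C \cup \D$ (the only non-root vertices with potentially nonzero colors) builds — via the standard GW tree-cost bound, Lemma~\ref{lemma:gwtree} — a forest connecting the live ones among them back to $\TOPT$, of total edge cost at most $2\sum_{v} y'_v \le 2\vc + 2\vd$ (using $y'_v \le y_v$ from Lemma~\ref{lem:gwlocal} and that only $\C\cup\D$ vertices contribute). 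Union-ing this forest with $\TOPT$ gives a connected subgraph spanning $\V(\TOPT) \cup (\text{live vertices of } \C\cup\D) \supseteq \NDGW$; taking a spanning subtree yields a Steiner tree on $\NDGW$ of cost at most $c(\TOPT) + 2\vc + 2\vd$, which is the claim.

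The main obstacle is making the ``forest connecting the leftover live vertices back to $\TOPT$'' argument fully rigorous: I need to ensure that in the modified instance, every live (non-dead) vertex that must be in the Steiner tree — namely those in $\C$ — is actually connected to $\Root$ in the GW output $\T'$, so that adding $\T'$'s edges to $\TOPT$ genuinely connects $\C$. This is where Lemma~\ref{lemma:gwlive} (every vertex not in the GW tree is dead) is the right tool: a live vertex of $\C$ cannot be dead in $\I'$ (since $\dead' \subseteq \dead$ and $\C \cap \DGW = \emptyset$), hence it lies in $\T'$ and is connected to $\Root$, i.e., to $\TOPT$. A secondary subtlety is the bookkeeping that only vertices in $\C \cup \D$ can have $y'_v > 0$: vertices in $\A \cup \B \subseteq \V(\TOPT)$ get $y'_v = 0$ by Lemma~\ref{lem:gwlocal}, and $\Root$'s color is not counted in the tree bound of Lemma~\ref{lemma:gwtree} (the sum there excludes $\RSET$). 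Combining Lemma~\ref{lemma:gwtree} applied to the run on $\I'$ with $y'_v \le y_v$ then gives $c(\T') \le 2\sum_{v \in \C \cup \D} y_v = 2\vc + 2\vd$, completing the estimate.

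\begin{proof}
Let $\G'$ be the graph obtained from $\G$ by adding a zero-weight edge from $\Root$ to every vertex in $\V(\TOPT)$, and let $\I'=(\G',\Root,\pb)$. We apply Lemma~\ref{lem:gwlocal} with the instance $\Ib$ in the role of $\I$ and $\I'$ as above, using $\U=\V(\TOPT)$; let $\yv'$ and $\dead'$ denote the coloring durations and dead set for the run of \GW~on $\I'$ (using the same vertex ordering). By Lemma~\ref{lem:gwlocal} we have $\yv'\le\yv$ for all $v$, $\yv'=0$ for all $v\in\V(\TOPT)$, and $\dead'\subseteq\DGW$.

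Let $\T'$ be the tree returned by the \GW~algorithm on $\I'$. By Lemma~\ref{lemma:gwtree} applied to this run,
\begin{align*}
    c(\T')\le 2\cdot\sum_{v\in\V(\T')-\RSET}\yv'\le 2\cdot\sum_{v\notin\V(\TOPT)}\yv'\le 2\cdot\sum_{v\notin\V(\TOPT)}\yv\text{,}
\end{align*}
where the second inequality uses $\yv'=0$ for $v\in\V(\TOPT)$ and the last uses $\yv'\le\yv$. Now the non-root vertices outside $\V(\TOPT)$ are exactly those in $\NVOPT=\C\cup\D$ by Definition~\ref{def:tbl}, so $\sum_{v\notin\V(\TOPT)}\yv=\vc+\vd$ and hence $c(\T')\le 2\vc+2\vd$.

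Here $c(\T')$ refers to the cost of edges in $\G'$; since the added edges have weight $0$, the subgraph $H$ of $\G$ consisting of $\TOPT$ together with the non-added edges of $\T'$ has $c(H)\le c(\TOPT)+c(\T')\le c(\TOPT)+2\vc+2\vd$, and contracting the zero-weight edges shows $H$ is connected. It remains to check that $H$ spans $\NDGW$. Every vertex of $\NDGW$ is either in $\A\cup\RSET\subseteq\V(\TOPT)$, in which case it lies in $\TOPT\subseteq H$, or in $\C$. A vertex $v\in\C$ satisfies $v\notin\DGW$ by Definition~\ref{def:tbl}, hence $v\notin\dead'$ since $\dead'\subseteq\DGW$; by Lemma~\ref{lemma:gwlive} applied to the run on $\I'$, $v\in\V(\T')\subseteq\V(H)$. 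Thus $H$ is a connected subgraph of $\G$ spanning $\NDGW$ with $c(H)\le c(\TOPT)+2\vc+2\vd$, and any spanning tree of $H$ is a Steiner tree on $\NDGW$ of no greater cost. Therefore $c(\T_{\OPTp_{\NDGW}})\le c(\TOPT)+2\vc+2\vd$.
\end{proof}
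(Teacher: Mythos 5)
Your proof is correct and follows essentially the same route as the paper's: the same auxiliary instance with zero-weight edges from $\Root$ to $\V(\TOPT)$, the same use of Lemma~\ref{lem:gwlocal} together with Lemma~\ref{lemma:gwtree} to bound the new GW tree by $2\vc+2\vd$, the same appeal to $\dead'\subseteq\DGW$ and Lemma~\ref{lemma:gwlive} to show $\C$ is connected, and the same final union with $\TOPT$. The only (immaterial) difference is taking $\U=\V(\TOPT)$ rather than $\V(\TOPT)-\RSET$.
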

\begin{proof}
We construct a new instance $\Ib'=(\G', \Root, \pb)$ where $\G'$ is obtained from $\G$ by adding a set $\E_0$ of edges of weight $0$ from $\Root$ to every vertex in $\U=\A\cup\B=\V(\TOPT)-\RSET$. Let $\TGW'$ be the resulting tree and $\yv'$ be the coloring duration for the vertices in this process assuming we assign the colors in the same way as we did when running the \GW~algorithm on $\Ib$. By Lemma \ref{lem:gwlocal}, $\yv'\leq\yv$ for all vertices in $\C\cup\D$. In addition, we have $\yv'=0$ for all vertices in $\U=\A\cup\B$. Then, using Lemma \ref{lemma:gwtree} we can bound the cost of $\TGW'$ as
 \begin{align*}
     \cc(\TGW') &\leq 2\sum_{v\in\V(\TGW')-\RSET}\yv' \tag{Lemma \ref{lemma:gwtree}}\\
     &\leq 2\sum_{v\in\V-\RSET}\yv' \tag{$\V(\TGW')\subseteq\V$}\\
     &\leq 2\sum_{v\in\A\cup\B}\yv' + 2\sum_{v\in\C\cup\D}\yv' \tag{$\A\cup\B\cup\C\cup\D=\V-\RSET$}\\
     &\leq 2\sum_{v\in\C\cup\D}\yv' \tag{$\yv'=0$ if $v\in\A\cup\B$ by Lemma \ref{lem:gwlocal}}\\
     &\leq 2\sum_{v\in\C\cup\D}\yv \tag{$\yv'\leq\yv$ by Lemma \ref{lem:gwlocal}}\\
     &=2\vc + 2\vd. \tag{Definition \ref{def:tbl}}
 \end{align*}

Let $\dead'$ be the set of dead vertices returned by the \GW~algorithm on $\Ib'$. Based on Lemma \ref{lem:gwlocal}, we have $\dead'\subseteq\DGW$. Therefore, as vertices in $\A\cup\C\cup\RSET=\NDGW$ are not part of $\DGW$, they cannot be part of $\dead'$ either and must be live vertices in this run. Lemma \ref{lemma:gwlive} means that these vertices are connected by $\TGW'$.

If we remove any edges in $\E_0$ from $\TGW'$, and instead add $\TOPT$, which is a spanning tree on $\A\cup\B\cup\RSET$, all the vertices in $\V(\TGW')$ will remain connected. So, we get a connected subgraph of $\G$ that connects $\NDGW$. The cost of this subgraph is at most 
\begin{align*}
    \cc((\TGW'-\E_0)\cup\TOPT) &\leq \cc(\TOPT)+\cc(\TGW')\\
    &\leq \cc(\TOPT) + 2\vc + 2\vd.
\end{align*}
As this subgraph connects $\NDGW$, its cost gives us an upper bound on the cost of the minimum Steiner tree on these vertices. So we have 
    $$\cc(\T_{\OPTp_{\NDGW}}) \le \cc(\TOPT) + 2 \vc + 2 \vd.$$

\end{proof}

We combine the last two lemmas to introduce an upper bound for the \ST~solution. We again state this upper bound in terms of $\cOPT$ and $\alf$. Here, we rely on the fact that $\alf\geq\p$ to add a non-negative value to an initial upper bound based on Lemmas \ref{lm:st_basic_bound} and \ref{lm:add_C_to_AB_tree}.
\begin{lemma}
For instance $\I$, we can bound the cost of the solution returned by the output of $\ST$ as follows:
    $$\cST \le \alf \cdot \cOPT + (\p - \alf) \va + (\p + \bta - \alf) \vba + (2\p + \bta - 2\alf) \vbb + (2\p - \alf \bta) \vc + (2\p + \bta - \alf \bta) \vd\text{.}$$
\end{lemma}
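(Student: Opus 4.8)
The plan is to chain Lemma~\ref{lm:st_basic_bound} and Lemma~\ref{lm:add_C_to_AB_tree} to first bound $\cST$ in terms of $c(\TOPT)$, and then trade the $c(\TOPT)$ term for an $\alf\cdot\cOPT$ term using the two lower bounds on the optimal solution, namely Lemma~\ref{lm:opt_basic_bound} (equivalently Corollary~\ref{col:bound_T_OPT}) and Lemma~\ref{lm:opt_main_bound}.

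First I would substitute $c(\T_{\OPTp_{\NDGW}}) \le c(\TOPT) + 2\vc + 2\vd$ from Lemma~\ref{lm:add_C_to_AB_tree} into $\cST \le \p\cdot c(\T_{\OPTp_{\NDGW}}) + \bta\vb + \bta\vd$ from Lemma~\ref{lm:st_basic_bound}, which gives
\[
\cST \le \p\cdot c(\TOPT) + 2\p\vc + 2\p\vd + \bta\vb + \bta\vd.
\]
Then I would write $\p\cdot c(\TOPT) = \alf\cdot c(\TOPT) + (\p-\alf)\,c(\TOPT)$ and handle the two pieces separately.

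For the term $(\p-\alf)\,c(\TOPT)$: since we work in the regime $\p\le\alf$, the factor $\p-\alf$ is non-positive, so I may replace $c(\TOPT)$ by a lower bound. The proof of Lemma~\ref{lm:opt_main_bound} already establishes $c(\TOPT)\ge\va+\vba+2\vbb$ (via the colored-edge counting argument), and since $\va,\vba,\vbb\ge 0$ this yields $(\p-\alf)\,c(\TOPT)\le(\p-\alf)(\va+\vba+2\vbb)$. For the term $\alf\cdot c(\TOPT)$: Corollary~\ref{col:bound_T_OPT} gives $c(\TOPT)\le\cOPT-\bta\vc-\bta\vd$, hence $\alf\cdot c(\TOPT)\le\alf\cOPT-\alf\bta\vc-\alf\bta\vd$.

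Finally I would substitute both estimates back, use $\vb=\vba+\vbb$, and collect coefficients: $\va$ gets $\p-\alf$; $\vba$ gets $(\p-\alf)+\bta$; $\vbb$ gets $2(\p-\alf)+\bta$; $\vc$ gets $2\p-\alf\bta$; and $\vd$ gets $2\p-\alf\bta+\bta$, which is exactly the claimed bound. The only subtlety — and the one point that genuinely needs the standing hypotheses $\p\le\alf\le 2$ — is getting the direction of the inequality right when bounding $(\p-\alf)\,c(\TOPT)$ from above by a lower bound on $c(\TOPT)$; everything else is routine bookkeeping of non-negative terms.
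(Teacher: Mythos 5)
Your proof is correct and follows essentially the same route as the paper: chain Lemma~\ref{lm:st_basic_bound} with Lemma~\ref{lm:add_C_to_AB_tree}, then use Corollary~\ref{col:bound_T_OPT} together with the lower bound $c(\TOPT)\ge\va+\vba+2\vbb$ and the hypothesis $\alf\ge\p$ to convert $c(\TOPT)$ into $\alf\cdot\cOPT$ plus the stated correction terms. The only cosmetic difference is that you split $\p\cdot c(\TOPT)$ as $\alf\cdot c(\TOPT)+(\p-\alf)c(\TOPT)$, whereas the paper adds the non-negative slack $(\alf-\p)(\cOPT-\va-\vba-2\vbb-\bta\vc-\bta\vd)$ via Lemma~\ref{lm:opt_main_bound}; the two bookkeeping schemes use the same ingredients and yield identical coefficients.
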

\begin{proof}
By combining Lemma \ref{lm:st_basic_bound} with Lemma \ref{lm:add_C_to_AB_tree}, we can derive a new bound for $\cST$.
\begin{align*}
    \cST \leq{}& \p \cdot c(\T_{\OPTp_{\NDGW}}) + \bta \vb + \bta \vd \tag{Lemma \ref{lm:st_basic_bound}}\\
    \leq{}&\p(c(\TOPT) + 2 \vc + 2 \vd) + \bta \vb + \bta \vd \tag{Lemma \ref{lm:add_C_to_AB_tree}}\\
    \leq{}&\p(\cOPT - \bta \vc - \bta \vd + 2 \vc + 2 \vd) + \bta \vb + \bta \vd \tag{Corollary~\ref{col:bound_T_OPT}}\\
    \leq{}&\p(\cOPT - \bta \vc - \bta \vd + 2 \vc + 2 \vd) + \bta \vb + \bta \vd  \\&+ (\alf - \p) (\cOPT - \va - \vba - 2\vbb - \bta \vc - \bta \vd) \tag{Lemma \ref{lm:opt_main_bound}, $\alf-\p\geq0$}\\
    ={}&\alf \cdot \cOPT + (\p - \alf) \va + (\p + \bta - \alf) \vba + (2\p + \bta - 2\alf) \vbb + (2\p - \alf \bta) \vc + (2\p + \bta - \alf \bta) \vd 
\end{align*}
\end{proof}

Now, assume that we want to show that the algorithm achieves an approximation factor of $\alf$. Then, to prove this by induction, we need to show two things. 
First, we need to show that in the base case where the dead set $\DGW$ returned by the \GW~algorithm has penalty $0$ and we do not make a recursive call, our solution is an $\alf$ approximation.
Secondly, we have to demonstrate the induction step. 
This means that we have to show that if our recursive call on instance $\R$ returns an $\alf$ approximation for this instance, the final returned solution will also be an $\alf$ approximation. If these two steps are accomplished, then by induction on the number of vertices with non-zero penalties (which decreases with every recursive call), we can prove that our algorithm achieves an $\alf$ approximation.

So far, we do not know the value of $\alf$ so we cannot prove the induction steps directly. Instead, we will show that if $\alf$ satisfies certain constraints then both the base case and the step of induction can be proven for that value of $\alf$ and therefore our algorithm will give us an $\alf$ approximation.
These constraints are obtained by thinking of $\alf$ as a variable and then trying to prove the induction base and the induction step for $\alf$. Minimizing $\alf$ in this system of constraints will give us an upper bound on the approximation factor of our algorithm.

In the following, we first assume that the recursive call on $\R$ is an $\alf$ approximation, and bound the iterative solution using this assumption. Then, in Section \ref{sec:find-alpha} we combine the bounds for the different solutions to find a system of constraints that restrict $\alf$. We also consider the constraints that arise from the base case being an $\alf$ approximation, which turn out to form a subset of the former constraints. Finally, we find the minimum value of $\alf$ that can satisfy these constraints to obtain our approximation guarantee. 


We start with the next lemma, which bounds the cost of the iterative solution's output, assuming that the recursive call returns an $\alf$ approximate solution for instance $\R$.
Here, $\OPT_{\R}$ denotes the optimal solution for the PCST instance $\R$.
\begin{lemma}
\label{lm:it_basic_bound}
For instance $\I$, the cost of the iterative solution, denoted as $\cIT$, can be bounded as follows:
    $$\cIT \le \alf\cdot \cOPTR + \bta \vb + \bta \vd\text{,}$$
assuming that the recursive call on instance $\R$ returns an $\alf$ approximate solution.
\end{lemma}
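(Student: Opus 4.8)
The plan is to bound $\cIT$ by relating the iterative solution to the recursive call on $\R$, and then relating $\cOPTR$ back to $\cOPT$. Recall that $\R=(\G,\Root,\pi')$ where $\pi'$ agrees with $\pi$ on the live vertices $\NDGW$ and is zero on the dead vertices $\DGW=\B\cup\D$. The recursive call returns a tree $\TIT$ that is an $\alf$ approximation for $\R$, so $c(\TIT)+\pi'(\NVIT)\le\alf\cdot\cOPTR$. The cost $\cIT$ that we actually record, however, uses the original penalties $\pi$, so $\cIT=c(\TIT)+\pi(\NVIT)$. First I would account for the difference $\pi(\NVIT)-\pi'(\NVIT)$: since $\pi$ and $\pi'$ differ only on $\DGW$, this difference is $\pi(\NVIT\cap\DGW)\le\pi(\DGW)=\pi(\B\cup\D)$. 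Using Lemma~\ref{lm:gwdead_beta} ($\bta\yv=\pi(v)$ for dead vertices), this is exactly $\bta\vb+\bta\vd$. Hence $\cIT=c(\TIT)+\pi'(\NVIT)+\big(\pi(\NVIT)-\pi'(\NVIT)\big)\le\alf\cdot\cOPTR+\bta\vb+\bta\vd$, which is the claimed bound.

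The one remaining point is to make sure the recursion is well-founded, i.e.\ that the induction hypothesis genuinely applies to $\R$: the number of vertices with non-zero penalty strictly decreases, because the condition $\pi(\DGW)>0$ on Line~\ref{line:check_Q_1_is_empty} guarantees at least one dead vertex had positive penalty and now has penalty $0$ in $\pi'$, while no new positive penalties are created. This is what licenses writing $c(\TIT)+\pi'(\NVIT)\le\alf\cdot\cOPTR$ in the first place. I would state this explicitly but briefly.

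I expect no serious obstacle here; the lemma is essentially a bookkeeping step that converts the "$\alf$ approximation for the modified instance $\R$" hypothesis into a bound stated in the common currency of $\cOPTR$ and the coloring quantities $\vb,\vd$. The only place to be a little careful is the direction of the penalty comparison — we must track that passing from $\pi'$ to $\pi$ can only \emph{increase} the penalty term (never decrease it), so that the added slack $\bta\vb+\bta\vd$ appears with a $+$ sign and genuinely upper-bounds the gap $\pi(\NVIT)-\pi'(\NVIT)=\pi(\NVIT\cap\DGW)$. The subsequent (harder) work of converting $\cOPTR$ into a bound involving $\cOPT$ and the $\va,\vba,\vbb,\vc,\vd$ terms, and then combining all three solution bounds to extract the constraint system on $\alf$ and $\bta$, is deferred to the later lemmas in Section~\ref{sec:find-alpha}, so it is not part of this proof.
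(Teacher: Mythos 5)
Your proposal is correct and follows essentially the same route as the paper: invoke the induction hypothesis to get $c(\TIT)+\pi'(\NVIT)\le\alf\cdot\cOPTR$, then bound the gap $\pi(\NVIT)-\pi'(\NVIT)$ by $\pi(\DGW)=\pi(\B\cup\D)=\bta\vb+\bta\vd$ via Lemma~\ref{lm:gwdead_beta}. Your added remark on why the recursion is well-founded (the vertex count with non-zero penalty strictly decreases) is a correct observation that the paper defers to its surrounding discussion rather than this lemma's proof.
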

\begin{proof}
    Based on our assumption, $\rpcst(\R)$ will return a solution that is an $\alf$-approximate of the optimal solution of instance $\R$ which we indicate by $\OPT_{\R}$.
    This gives us the following bound:
    $$c(\TIT) + \pi'(\NVIT) \le \alf \cdot \cOPTR\text{.}$$
    However, as $\cIT = c(\TIT) + \pi(\NVIT)$, we need to establish the relationship between $\pi(\NVIT)$ and $\pi'(\NVIT)$. The only difference between these functions lies in setting the penalty for vertices in $\DGW = \B \cup \D$ to zero in $\pi'$, as indicated in Line~\ref{line:initial_pi}. Thus, we can conclude that
    \begin{align*}
        \pi(\NVIT) &\le \pi'(\NVIT) + \pi(\B \cup \D) \\
        &= \pi'(\NVIT) + \bta\sum_{v \in (\B \cup \D)} \yv \tag{Lemma~\ref{lm:gwdead_beta}}\\
        &= \pi'(\NVIT) + \bta\vb + \bta\vd\text{.} \tag{Definition~\ref{def:tbl}}
    \end{align*}
    By combining these inequalities, we get
    $$\cIT = c(\TIT) + \pi(\NVIT) \le c(\TIT) + \pi'(\NVIT) + \bta\vb + \bta\vd \le \alf \cdot\cOPTR + \bta \vb + \bta \vd\text{.}$$
\end{proof}

\begin{lemma}
\label{lm:remove_b_1_edges}
    For an instance $\I$, we can remove a set of edges with a total length of $\vba$ from $\TOPT$ in such a way that the vertices in $\A$ remain connected to $\Root$.
\end{lemma}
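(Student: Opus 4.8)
The goal is to show that we can delete edges of $\TOPT$ with total length $\vba$ while keeping every vertex of $\A$ connected to $\Root$. The plan is to exhibit these edges explicitly: for each single-edge set $\SET$ (a set with $|\delta(\SET)\cap\TOPT|=1$) that colors some vertex of $\B$, the unique edge $e_\SET$ in $\delta(\SET)\cap\TOPT$ is a natural candidate, because the coloring charged to $\B$-vertices inside single-edge sets is exactly $\vba = \sum_{v\in\B}\sum_{|\delta(\SET)\cap\TOPT|=1}\ysv$. First I would recall that $\TOPT$ is a tree, so removing an edge $e_\SET$ splits it into two components, one of which is precisely the vertex set $\V(\TOPT)\cap\SET$ (the ``side'' of the cut contained in $\SET$) — this is the standard fact that a single cut edge of a tree corresponds to the partition induced by a laminar separating set. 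Since $\Root\notin\SET$ whenever $\SET$ colors anything (otherwise all of $\SET$'s coloring would be assigned to $\Root$), the root lies on the other side.

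The key structural point is that the single-edge sets arising in the GW coloring form a laminar family (they are components at various moments of the continuous process, so any two are nested or disjoint), and hence the edges $e_\SET$ they cut in $\TOPT$ are distinct and their removal interacts cleanly. I would argue as follows: let $\SET_1,\dots,\SET_k$ be the \emph{maximal} single-edge sets (under inclusion) among those that color a vertex of $\B$; by laminarity their cut edges $e_{\SET_1},\dots,e_{\SET_k}$ are distinct, and removing all of them from $\TOPT$ disconnects exactly the vertex sets $\V(\TOPT)\cap\SET_i$ from $\Root$. Now I must check two things: (i) no vertex of $\A$ gets disconnected, and (ii) the total length removed is at least $\vba$. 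For (i): a vertex $v\in\A$ is live in $\pcstgw(\Ib)$, so $v$ is never contained in a dead set; in particular $v\notin\SET_i$ for any $\SET_i$ (single-edge sets that color $\B$-vertices are dead sets, or at least subsets involved in the coloring of dead vertices — I'd need to confirm from the GW description that the relevant sets are dead), hence $v$ stays on the root's side. For (ii): the total length of $e_{\SET_i}$ equals the amount of color deposited on it; I want to lower-bound $\sum_i \cc(e_{\SET_i})$ by $\vba$. Since each $e_{\SET_i}$ is fully colored, $\cc(e_{\SET_i}) \ge \sum_{\SET: e_{\SET_i}\in\delta(\SET)} \ys$, and every moment of coloring charged to a $\B$-vertex inside a single-edge set $\SET'$ contributes to the unique edge $e_{\SET'}\in\delta(\SET')\cap\TOPT$, which is cut by (and equals) $e_{\SET_i}$ for the maximal $\SET_i\supseteq\SET'$; summing over all such moments recovers exactly $\vba$, and distinct maximal sets give distinct edges so there is no double-counting on the left.

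The main obstacle I anticipate is the bookkeeping in (ii): making precise that ``each $\B$-coloring-moment inside a single-edge set maps to exactly one removed edge, injectively across the removed edges,'' i.e. that when a smaller single-edge set $\SET'$ colors $v\in\B$, the edge it would cut is the \emph{same} edge $e_{\SET_i}$ as its maximal ancestor cuts — this requires that nesting of single-edge sets forces their cut edges in a tree to coincide, which follows because if $\SET'\subseteq\SET_i$ and both cut exactly one edge of the tree $\TOPT$, the one edge leaving $\SET_i$ must also be the one edge leaving $\SET'$ (any other edge incident to $\SET'$ would stay within $\SET_i$ and then $\SET_i$ would be on a cycle-free path, so $\SET'$'s cut edge is forced up to $\SET_i$'s boundary). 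Once this laminar-cut-edge identification is nailed down, the length accounting is immediate. A secondary subtlety is confirming that single-edge sets that color $\B$-vertices do not contain $\Root$ and do not contain $\A$-vertices; both follow from the coloring-assignment convention ($\Root$ comes first in the ordering, so any set containing $\Root$ assigns all its color to $\Root$) and from $\A$-vertices being live, so I would state these as quick observations before the main argument.
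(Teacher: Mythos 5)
Your overall strategy matches the paper's: for each single-edge set $\SET$ that charges coloring to a vertex of $\B$, remove the unique edge of $\delta(\SET)\cap\TOPT$, observe that such a set consists entirely of vertices that eventually die (so no vertex of $\A$ is lost), and account for the removed length via the coloring. However, your step (ii) rests on a false claim: a single-edge set $\SET'$ nested inside another single-edge set $\SET_i$ need \emph{not} cut the same edge of $\TOPT$. Take $\TOPT$ to be the path $\Root\,\text{--}\,b\,\text{--}\,c$ with $\SET'=\{c\}$ and $\SET_i=\{b,c\}$ (a perfectly possible evolution of GW components): then $\delta(\SET')\cap\TOPT=\{bc\}$ while $\delta(\SET_i)\cap\TOPT=\{\Root b\}$. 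The edge leaving $\SET'$ lies entirely inside $\SET_i$, so it is not ``forced up to $\SET_i$'s boundary.'' Consequently, if you remove only the cut edges of the \emph{maximal} single-edge sets, the coloring charged to $\B$-vertices by non-maximal single-edge sets is deposited on edges you did not remove, and your lower bound $\sum_i \cc(e_{\SET_i})\ge\vba$ does not follow.

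The repair is to drop maximality and laminarity altogether and remove the cut edge of \emph{every} single-edge set that colors a $\B$-vertex, which is what the paper does. Several such sets may cut the same edge $e$, but their colorings occupy disjoint portions of $e$ and the total color deposited on any edge never exceeds its length, so summing over the (automatically distinct) removed edges yields at least $\vba$ with no double counting; and each removal still disconnects only the vertices of $\TOPT$ inside the corresponding set. Separately, the point you flagged as ``needing confirmation''---that these sets contain no $\A$-vertices---does require an argument: at the moment such a set $\SET$ uses the color of $v\in\B$, the vertex $v$ is still live but later dies, and since all of $\SET$ remains in the same component thereafter, every vertex of $\SET$ dies with $v$; hence $\SET\subseteq\B\cup\D=\DGW$ and $\SET\cap\A=\emptyset$.
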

\begin{proof}
    Consider a moment of coloring with the color of a vertex $v \in \B$ in a single-edge set $S \subseteq \V$.
    Given that we are coloring with $v$ at this moment, the vertex is still a live vertex.
    However, since $v$ is in $\B$, it will become dead at some moment of the algorithm.
    Since all the vertices in $S$ will remain in the same component until the end of the algorithm, the moment $v$ becomes dead, all vertices in $S$ will also become dead.
    That means, every vertex in $S$ is either in $\B$ or $\D$, i.e. $S \subseteq \B \cup \D=\DGW$.

    Since $S$ is a single-edge set, there is only one edge from $\TOPT$ that cuts this set.
    Let assume that this edge is $e$, i.e. $\delta(S) \cap \TOPT = \{e\}$.
    Removing edge $e$ from $\TOPT$, will only disconnect vertices in $S$ from $\Root$, since $S$ is a single-edge set and paths in $\TOPT$ from $\Root$ to vertices outside of $S$ will not pass through $e$.
    
    If we remove all such edges from $\TOPT$, the total cost of the removed edges will be at least $\vba$. This is due to the fact that the coloring on these edges from single-cut sets assigned to the vertices in $\B$ is equal to $\vba$, and the coloring on each edge is at most its weight. 
    Note that, each single-edge set is coloring exactly one edge of the optimal solution at each moment. So, we can remove edges with a total length of at least $\vba$ from $\TOPT$ without disconnecting vertices in $\A$ from $\Root$.
\end{proof}

\begin{lemma}
\label{lm:remaining_bound}
    For an instance $\I$, we can bound the cost of the optimal solution for instance $\R$ by 
    $$\cOPTR \le \cOPT - \bta \vd - \vba\text{,}$$
    where $\R$ is created at Line~\ref{line:construct_R} of $\rpcst(\I)$.
\end{lemma}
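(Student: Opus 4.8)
The plan is to bound $\cOPTR$ from above by exhibiting one explicit feasible solution for the instance $\R$, obtained by modifying $\TOPT$; since $\cOPTR$ is at most the cost of any feasible solution, this suffices. First I would recall that $\R=(\G,\Root,\pi')$ differs from $\I$ only in that the penalty of every vertex of $\DGW=\B\cup\D$ has been zeroed out (Line \ref{line:initial_pi}), while the penalties on the live vertices $\NDGW\supseteq\C$ are untouched. Moreover, since $\D$ consists of dead vertices, Lemma \ref{lm:gwdead_beta} gives $\pi(\D)=\bta\vd$, so that $\cOPT=c(\TOPT)+\pi(\NVOPT)=c(\TOPT)+\pi(\C)+\pi(\D)=c(\TOPT)+\pi(\C)+\bta\vd$.

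Next I would apply Lemma \ref{lm:remove_b_1_edges} to $\TOPT$ to delete a set of edges of total length at least $\vba$ while keeping $\A$ (and $\Root$) connected; I take the connected component of $\Root$ in the resulting forest to be the tree $T'$ of my candidate solution, so that $c(T')\le c(\TOPT)-\vba$. The crucial observation, which comes directly out of the proof of Lemma \ref{lm:remove_b_1_edges}, is that each deleted edge is the unique $\TOPT$-edge leaving a single-edge set $S\subseteq\B\cup\D=\DGW$, so the only vertices disconnected from $\Root$ by these deletions lie in $\DGW$ and hence have penalty $0$ under $\pi'$.

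Then I would account for the penalty of $T'$ in the instance $\R$. The vertices outside $T'$ consist of $\NVOPT=\C\cup\D$ together with the newly disconnected subset of $\B$; among these, the $\D$-part and the $\B$-part contribute nothing under $\pi'$, leaving only $\pi'(\C)=\pi(\C)$. Hence $\cOPTR\le c(T')+\pi'(\overline{\V(T')})\le c(\TOPT)-\vba+\pi(\C)$, and substituting $\pi(\C)=\cOPT-c(\TOPT)-\bta\vd$ from the first paragraph gives $\cOPTR\le\cOPT-\bta\vd-\vba$, as claimed.

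I do not anticipate a real difficulty here; the only point that needs care is verifying that the set of vertices disconnected by the edge deletions of Lemma \ref{lm:remove_b_1_edges} is a union of subsets of $\DGW$ — this is exactly what the ``$\A$ remains connected to $\Root$'' phrasing of that lemma secures, and it is what prevents us from incurring any extra penalty on $\A$ or on $\Root$.
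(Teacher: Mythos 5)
Your proposal is correct and follows essentially the same route as the paper's proof: both construct a feasible solution for $\R$ from $\OPT$ by invoking Lemma~\ref{lm:remove_b_1_edges} to shed edges of total length $\vba$ and by observing that the $\pi(\D)=\bta\vd$ penalty vanishes under $\pi'$. Your version is in fact slightly more careful than the paper's, since you explicitly verify that the vertices disconnected by the edge deletions lie in $\DGW$ and hence incur no penalty in $\pi'$.
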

\begin{proof}
    To prove this lemma, we start by showing that there is a solution for instance $\R$ that costs at most $\cOPT - \bta \vd - \vba$.
    Since $\OPT_{\R}$ is the optimal solution of instance $\R$, its cost would not exceed the cost of the instance we are constructing. 
    This will complete the proof of the lemma.
    To construct the mentioned instance, we take the optimal solution of instance $\I$, which we indicate by $\OPT$, and remove extra edges from its tree $\TOPT$. 
    Additionally, we do not need to pay penalties for pairs in $\OPT$ whose penalty is set to zero in $\pi'$ at Line \ref{line:initial_pi} for instance $\R$.

    Let's start with the tree $\TOPT$.
    Using Lemma \ref{lm:remove_b_1_edges}, we can remove a set of edges from $\TOPT$ with a total length of at least $\vba$ without disconnecting vertices in set $\A$ from $\Root$.

    Moreover, the optimal solution pays penalties for vertices in set $\C \cup \D$. 
    However, instance $\R$ has been constructed by assigning zero to the penalty of vertices in set $\DGW$, which includes vertices in set $\D$. 
    Therefore, the penalty that we pay for vertices in $\D$ in the optimal solution is not required to be paid in $\OPT_{\R}$. 
    This deducts $\pi(\D)$ from the cost of the optimal solution, which is equal to $\bta \vd$ according to Lemma~\ref{lm:gwdead_beta}. 
    This completes the proof of this lemma.
\end{proof}

\begin{lemma}
For instance $\I$, the output of the iterative solution can be bounded as follows:
    $$\cIT \le \alf\cdot \cOPT + (\bta - \alf) \vba + \bta \vbb + (\bta - \alf \bta) \vd$$
assuming that the recursive call on instance $\R$ returns an $\alf$ approximate solution.
\end{lemma}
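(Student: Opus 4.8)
The plan is to obtain this bound by chaining together the two previous lemmas that already isolate everything we need: Lemma~\ref{lm:it_basic_bound}, which controls $\cIT$ in terms of $\cOPTR$, and Lemma~\ref{lm:remaining_bound}, which controls $\cOPTR$ in terms of $\cOPT$. No new combinatorial argument is required at this stage; the substantive work (the exchange argument removing $\vba$ worth of edges, and the fact that the recursion zeroes out the penalties of $\D$) has already been done in Lemmas~\ref{lm:remove_b_1_edges} and~\ref{lm:remaining_bound}.

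Concretely, I would start from Lemma~\ref{lm:it_basic_bound}, which under the inductive hypothesis gives
\[
\cIT \le \alf\cdot\cOPTR + \bta\vb + \bta\vd .
\]
Then I would invoke Lemma~\ref{lm:remaining_bound}, namely $\cOPTR \le \cOPT - \bta\vd - \vba$, and multiply it by $\alf$. Here the only thing to check is that $\alf \ge 0$, so that multiplying the inequality by $\alf$ preserves its direction; this is immediate since we are working in the regime $\p \le \alf \le 2$ and $\p = \ln(4)+\epsilon > 0$. Substituting yields
\[
\cIT \le \alf\bigl(\cOPT - \bta\vd - \vba\bigr) + \bta\vb + \bta\vd .
\]

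Finally I would expand $\vb = \vba + \vbb$ (recorded just after Definition~\ref{def:single_multi_edge}) and collect the coefficients of $\vba$, $\vbb$, and $\vd$:
\[
\cIT \le \alf\cdot\cOPT + (\bta-\alf)\vba + \bta\vbb + (\bta-\alf\bta)\vd ,
\]
which is exactly the claimed inequality. The "hard part" is really just bookkeeping — making sure the $\bta\vba$ coming from $\bta\vb$ combines with the $-\alf\vba$ coming from $\alf\cdot(-\vba)$ to give $(\bta-\alf)\vba$, and that the $\vd$ terms $-\alf\bta\vd + \bta\vd$ give $(\bta-\alf\bta)\vd$ while the $+\bta\vd$ and $-\alf\bta\vd$ pieces are correctly sourced from the two lemmas; there is no genuine obstacle beyond tracking signs carefully.
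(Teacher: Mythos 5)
Your proposal is correct and follows exactly the same route as the paper's proof: apply Lemma~\ref{lm:remaining_bound} (scaled by $\alf\ge 0$) inside the bound of Lemma~\ref{lm:it_basic_bound}, expand $\vb=\vba+\vbb$, and collect coefficients. The sign bookkeeping you describe matches the paper's computation term for term.
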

\begin{proof}
We utilize Lemma \ref{lm:remaining_bound} to modify the terms of the bound in Lemma \ref{lm:it_basic_bound}.
\begin{align*}
    \cIT &\le \alf \cOPTR + \bta \vb + \bta \vd \tag{Lemma \ref{lm:it_basic_bound}}\\
    &\le \alf (\cOPT - \bta \vd - \vba) + \bta (\vba + \vbb) + \bta \vd \tag{Lemma \ref{lm:remaining_bound} and Definition \ref{def:single_multi_edge}}\\
    &\le \alf \cdot\cOPT + (\bta - \alf) \vba + \bta \vbb + (\bta - \alf \bta) \vd 
\end{align*}
\end{proof}

\subsection{Finding The Approximation Factor}
\label{sec:find-alpha}

Now that we have bounded $\cGW$, $\cST$, and $\cIT$, we can determine an appropriate value for $\alf$ such that, during each call of $\rpcst$ on instance $\I$, the minimum of $\cGW$, $\cST$, and $\cIT$ is at most $\alf \cdot \cOPT$.

To achieve this, we assign weights to each solution in a way that the weighted average of these three bounds is at most $\alf \cdot \cOPT$. 
This completes our proof and demonstrates that the minimum among them is at most $\alf \cdot \cOPT$ since any weighted average of a set of values is greater than or equal to their minimum.

Denoting $\wx$, $\wy$, and $\wz$ as the weights of solutions $\GW$, $\ST$, and $\IT$ respectively, let $\cWAG$ represent their weighted average cost. 
As we are taking an average, we assume $\wx + \wy + \wz = 1$ to simplify the calculation. 
We also have $\wx, \wy, \wz \ge 0$.
The bound for the weighted average is then given by
\begin{align*}
    \cWAG \le{}& (\alf \cdot \wx + \alf \cdot \wy + \alf \cdot \wz) \cdot \cOPT \\
    & + ((2 - \alf) \cdot \wx + (\p - \alf) \cdot \wy) \cdot \va \\
    & + ((2 - \alf) \cdot \wx + (\p + \bta - \alf) \cdot \wy + (\bta - \alf) \cdot \wz) \cdot \vba \\
    & + ((2 - 2\alf) \cdot \wx + (2\p + \bta - 2\alf) \cdot \wy + \bta \cdot \wz) \cdot \vbb \\
    & + ((2 - \alf \bta) \cdot \wx + (2\p - \alf \bta) \cdot \wy) \cdot \vc \\
    & + ((2 - \alf \bta) \cdot \wx + (2\p + \bta - \alf \bta) \cdot \wy + (\bta - \alf \bta) \cdot \wz) \cdot \vd
\end{align*}

Given that $\wx + \wy + \wz = 1$, we have $(\alf \cdot \wx + \alf \cdot \wy + \alf \cdot \wz) \cdot \cOPT = \alf \cdot \cOPT$. 
Thus, the first term in the expression is $\alf \cdot \cOPT$.

To ensure $\cWAG \le \alf \cdot \cOPT$, we aim to make the rest of the expression non-positive. 
Since $\va$, $\vba$, $\vbb$, $\vc$, and $\vd$ are non-negative values, it suffices to make their coefficients non-positive by assigning suitable values to $\alf$, $\bta$, and the weights $\wx$, $\wy$, and $\wz$.
This leads to finding values that satisfy the following inequalities, with each inequality corresponding to one of the coefficients.
\begin{align*}
    &(2 - \alf) \cdot \wx + (\p - \alf) \cdot \wy \le 0 \tag{$\va$}\\
    &(2 - \alf) \cdot \wx + (\p + \bta - \alf) \cdot \wy + (\bta - \alf) \cdot \wz \le 0 \tag{$\vba$}\\
    &(2 - 2\alf) \cdot \wx + (2\p + \bta - 2\alf) \cdot \wy + \bta \cdot \wz \le 0 \tag{$\vbb$}\\
    &(2 - \alf \bta) \cdot \wx + (2\p - \alf \bta) \cdot \wy \le 0 \tag{$\vc$}\\
    &(2 - \alf \bta) \cdot \wx + (2\p + \bta - \alf \bta) \cdot \wy + (\bta - \alf \bta) \cdot \wz \le 0 \tag{$\vd$}
\end{align*}

We can also use a weighted average to ensure that our solution in the induction base has cost $\leq\alf\cdot\cOPT$. In this case, the $\IT$ solution cannot be employed as it represents the final step of recursion. So, we must have $\wz=0$. 
Additionally, it's essential to note that in this step, $\pi(\DGW) = \pi(\B \cup \D) = 0$, resulting in $\vba = \vbb = \vd = 0$. 
Thus, only the inequalities for the coefficients of $\va$ and $\vc$ remain relevant, which already do not contain $\wz$:
\begin{align*}
    &(2 - \alf) \cdot \wx + (\p - \alf) \cdot \wy \le 0 \tag{$\va$}\\
    &(2 - \alf \bta) \cdot \wx + (2\p - \alf \bta) \cdot \wy \le 0 \tag{$\vc$}
\end{align*}

We can see that if a solution for the system of constraints used for the induction step is found, setting
$\wz=0$ and scaling $\wx$ and $\wy$ by a factor of $\frac{1}{1-\wz}$ gives us a solution for these two new constraints with $\wx+\wy=1$ and $\wz=0$. So, whatever values of $\alf$ and $\bta$ we find by solving the initial system of inequalities will give us a valid solution and an approximation guarantee of $\alf$.

Considering the best-known approximation factor for the Steiner tree problem, which is $\p=\ln(4)+\epsilon$~\cite{DBLP:conf/stoc/ByrkaGRS10}, we determine that choosing the values $\alf=\apx$, $\bta=1.252$, $\wx=0.385$, $\wy=0.187$, and $\wz=0.428$ satisfies all the inequalities for a small enough value of $\epsilon$. 
This provides a valid proof for both the induction base and induction step, leading to the conclusion of the following theorem.
\begin{theorem}
    The minimum cost among $\GW$, $\ST$, and $\IT$ is a $\apx$-approximate solution for the Prize-Collecting Steiner Tree instance $\I$. Therefore, \rpcst~ is an $\apx$ approximation for PCST.
\end{theorem}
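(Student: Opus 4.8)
### Proof Proposal

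The plan is to assemble the final theorem purely by combining the five bounds already established in the analysis — the lower bound on $\cOPT$ from Lemma~\ref{lm:opt_main_bound}, the two upper bounds on $\cGW$ and $\cST$ stated in terms of $\alf\cdot\cOPT$, and the upper bound on $\cIT$ in terms of $\alf\cdot\cOPT$ — and then verifying that a convex combination of these upper bounds has no positive residual terms. Since $\min\{\cGW,\cST,\cIT\}$ is at most any weighted average of the three with nonnegative weights summing to $1$, it suffices to exhibit weights $\wx,\wy,\wz\ge 0$ with $\wx+\wy+\wz=1$ and a value of $\alf$ (here $\alf=\apx$, $\bta=1.252$) for which the weighted average $\cWAG$ satisfies $\cWAG\le\alf\cdot\cOPT$. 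First I would write out $\cWAG$ as the weighted sum of the three displayed upper bounds; the $\cOPT$ coefficients combine to exactly $\alf$ because the weights sum to $1$, and the remaining terms are linear in the nonnegative quantities $\va,\vba,\vbb,\vc,\vd$.

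The core of the argument is then reduced to checking the five coefficient inequalities displayed in Section~\ref{sec:find-alpha} — one per variable among $\va,\vba,\vbb,\vc,\vd$ — for the proposed numerical values $\alf=\apx$, $\bta=1.252$, $\wx=0.385$, $\wy=0.187$, $\wz=0.428$, with $\p=\ln(4)+\epsilon$. Each inequality is a linear expression in $\alf,\bta,\p$ and the weights, so the verification is a finite computation: substitute and confirm each left-hand side is $\le 0$ for $\epsilon$ small enough (the $\epsilon$ enters only through $\p$, and the inequalities that involve $\p$ should hold strictly at $\epsilon=0$, leaving slack to absorb a small $\epsilon$). Once all five hold, every non-$\cOPT$ term in $\cWAG$ has a nonpositive coefficient multiplying a nonnegative quantity, so $\cWAG\le\alf\cdot\cOPT$, and hence $\min\{\cGW,\cST,\cIT\}\le\cWAG\le\apx\cdot\cOPT$.

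To turn this per-call statement into a statement about the algorithm $\rpcst$, I would run an induction on the number of vertices with nonzero penalty in the instance — this strictly decreases at each recursive call since $\pi'$ zeroes out the penalties of all vertices in $\DGW$, and the recursion only happens when $\pi(\DGW)\neq 0$. The induction hypothesis is exactly what Lemma for $\cIT$ assumed: that $\rpcst(\R)$ returns an $\alf$-approximation for $\R$. The base case is when $\pi(\DGW)=0$, where no recursive call is made and $\vba=\vbb=\vd=0$; here only the $\va$ and $\vc$ coefficient inequalities are relevant, and these are a subset of the five already verified (after rescaling $\wx,\wy$ by $1/(1-\wz)$ with $\wz=0$, as noted in the text), so the base case follows from the same numerical check. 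The inductive step is the per-call bound above. Since the algorithm returns the minimum-cost solution among $\GW,\ST,\IT$ (or among $\GW,\ST$ in the base case), the returned solution costs at most $\apx\cdot\cOPT$ in every case.

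The main obstacle — really the only substantive point — is confirming that the specific rational weights genuinely satisfy all five inequalities simultaneously with enough room for the $\epsilon$ slack; this is a routine but error-prone arithmetic check, and it is where the choice $\bta=1.252$ and the particular weight triple are forced. Everything else (the convexity argument, the induction on nonzero-penalty count, the reduction of the base case to a subset of the constraints) is structural and follows immediately from the lemmas already proved. I would therefore present the proof as: (1) state the weighted-average identity; (2) invoke the five lemmas giving the upper bounds; (3) verify the five coefficient inequalities numerically; (4) conclude $\cWAG\le\apx\cdot\cOPT$ for each call; (5) close the induction on the number of nonzero-penalty vertices to get the guarantee for $\rpcst$.
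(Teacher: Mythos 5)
Your proposal is correct and follows essentially the same route as the paper: the weighted-average/convexity argument, the five coefficient inequalities from Section~\ref{sec:find-alpha}, the numerical verification at $\alf=\apx$, $\bta=1.252$, $\wx=0.385$, $\wy=0.187$, $\wz=0.428$, and the induction on the number of nonzero-penalty vertices with the base case reduced to the $\va$ and $\vc$ constraints. The only caution is that the slack in several of the inequalities is on the order of $10^{-4}$ or smaller, so the "room to absorb $\epsilon$" you mention is real but tight and the arithmetic check must be done carefully.
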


Finally, we note that our algorithm runs in polynomial time.

\begin{theorem}
    The procedure $\rpcst(\I)$ runs in polynomial time.
\end{theorem}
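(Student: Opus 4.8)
The plan is to bound two quantities and multiply them: the number of times \rpcst~is invoked during one top-level execution, and the work done inside a single invocation outside of its recursive call. The right potential to track is $N(\I)$, the number of vertices of $\G$ that have strictly positive penalty under $\pi$; this is a nonnegative integer with $N(\I)\le|\V|$.

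First I would show that each level of recursion strictly decreases $N$. The recursive call on $\R$ at Line~\ref{line:recursive} is reached only when the test at Line~\ref{line:check_Q_1_is_empty} fails, i.e.\ when $\pi(\DGW)\neq 0$; this forces $\DGW$ to contain at least one vertex $v$ with $\pi(v)>0$. Since $\R$ is obtained from $\I$ by zeroing the penalties of all vertices in $\DGW$ (Line~\ref{line:initial_pi}) and leaving every other penalty unchanged, we get $N(\R)\le N(\I)-1$. Here I would also record that $\Root\notin\DGW$: the component containing $\Root$ stays active throughout $\pcstgw$ (it has infinite coloring potential), so $\Root$ is never put into a dead set, and hence $\R$ is still a legitimate rooted PCST instance with $\pi(\Root)=\infty$ and $v\neq\Root$. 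Consequently the instances along the recursion have strictly decreasing $N$, so the recursion bottoms out after at most $|\V|$ levels, and \rpcst~is called at most $|\V|$ times in total.

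Next I would check that a single call to \rpcst, excluding the cost of the recursive call it may make, runs in polynomial time. Building $\pb$ and $\Ib$, building $\pi'$ and $\R$, and evaluating $\cGW$, $\cST$, $\cIT$ are all linear-time in the size of $\G$. The invocation $\pcstgw(\Ib)$ runs in polynomial time: the growth phase of Algorithm~\ref{alg:gw} processes $O(|\V|)$ ``a set becomes inactive'' events and $O(|\V|)$ ``an edge is added to $\F$'' events, each handled by a polynomial-time computation of the minimizers $\Delta_1,\Delta_2$, and the pruning phase removes at most $O(|\V|)$ dead sets. The invocation $\steinertree(\G,\NDGW)$ runs in polynomial time because we plug in a polynomial-time $\p$-approximation for Steiner Tree; with $\p=\ln(4)+\epsilon$ for a fixed constant $\epsilon$, the algorithm of~\cite{DBLP:conf/stoc/ByrkaGRS10} has this property.

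Finally, multiplying the $O(|\V|)$ bound on the number of calls by the polynomial per-call bound gives a polynomial total running time. The only real subtlety, and the step I would state most carefully, is the termination/depth argument: one must observe that recursion is entered \emph{precisely} when $\DGW$ contains a positively-penalized vertex, so that zeroing $\DGW$'s penalties genuinely decreases $N$; the remaining work is routine bookkeeping once we may cite that $\pcstgw$ and $\steinertree$ are themselves polynomial-time procedures.
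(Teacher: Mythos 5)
Your proposal is correct and follows essentially the same argument as the paper: the recursion depth is bounded by the number of positively-penalized vertices, which strictly decreases at each recursive call because recursion is entered only when $\pi(\DGW)\neq 0$, and each individual call does polynomial work via $\pcstgw$ and $\steinertree$. Your additional observations (that $\Root\notin\DGW$ and the explicit potential function $N(\I)$) are correct refinements of the same bookkeeping.
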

\begin{proof}
    The procedure $\rpcst(\I)$ calls the $\pcstgw(\Ib)$ which runs in polynomial time and a polynomial time algorithm \steinertree~ for the Steiner tree problem.
    Then it recursively calls itself on a new instance such that the new instance has more vertices with a penalty of $0$. The construction of this instance involves a simple loop on the vertices and is done in polynomial time.
    Since the number of vertices is $|\V|$, and each time the number of vertices with non-zero penalty decreases by one, the recursion depth is at most $|\V|$.
    So, in $\rpcst$ we have a polynomial number of recursive steps, and each step takes a polynomial amount of time. Therefore, the total running time of the algorithm is polynomial in the size of the input.
\end{proof}

\section{Necessities in Our Algorithm}
\label{sec:intuition}
In this section, we demonstrate the necessity of utilizing all three solutions in $\rpcst$ and selecting the minimum among them. 
Table~\ref{tabel:sign_coefficient} is completed based on the constraints $1 < p < \alf < 1.8$, derived from the NP-hardness of finding an exact algorithm for Steiner tree, the fact that Steiner tree is a special case of PCST, and the goal of achieving an approximation factor better than $1.8$.
Additionally, we select $\bta$ such that $2/\alf \le \bta \le \alf$ because if $2/\alf > \bta$, both coefficients in $\vc$ will be positive. 
Also, if $\bta > \alf$, all coefficients of $\vba$ become positive.

\begin{table}[H]
\begin{center}
\begin{tabular}{|c|c|c|c|c|c|}
\cline{2-6}
\multicolumn{1}{c|}{} & $\va$ & $\vba$ & $\vbb$ & $\vc$ & $\vd$ \\
\hline
\GW & + & + & - & - & -\\
\hline
\ST & - & + & +\footnotemark & ? & +\\
\hline
\IT & 0 & - & + & 0 & -\\
\hline
\end{tabular}

\caption{Sign of coefficients of each solution.}
\label{tabel:sign_coefficient}
\end{center}
\end{table}
\footnotetext{Could potentially turn negative after further improvement in the approximation factor of the Steiner tree problem.}

Table~\ref{tabel:sign_coefficient} demonstrates the sign of the coefficient for each variable in every algorithm. 
We refer to this table to explain why all three algorithms are essential.
We need to find a combination of these algorithms such that the weighted average of these coefficients adds up to zero.
Since each row associated with an algorithm has at least one positive value, achieving this balance is not possible if we use only one of the algorithms.
Moreover, omitting $\IT$ results in a positive coefficient for $\vba$, making the iterative approach necessary.
Similarly, using $\GW$ and $\IT$ together leads to a positive coefficient for $\va$, emphasizing the need for $\ST$ to offset it.
Lastly, if we drop $\GW$, the coefficient of $\vbb$ constrains our approximation factor, as its coefficient in the $\IT$ algorithm is positive, and in the $\ST$ algorithm, it is $2\p + \bta - 2 \alf$.
Given that the best-known approximation factor for the Steiner tree is $\ln(4)+\epsilon$~\cite{DBLP:conf/stoc/ByrkaGRS10}, replacing $\p$ with $\ln(4)+\epsilon$ results in a positive value for the coefficient of $\vbb$ in the $\ST$ algorithm.
Therefore, the $\GW$ algorithm is necessary to decrease the coefficient of $\vbb$.

\subsection{Bad example for \boldmath$\bta > 2$}
\label{sec:bta2}
Let $\bta=2(1+\epsilon)$ for some $\epsilon>0$. We consider a star graph $\G$ with $n+1$ vertices as shown in Figure \ref{fig:bta2}, where one vertex is a central vertex and all other vertices are connected to this vertex with edges of length $1$ for some value of $n$ such that $\frac{1}{n-1}<\epsilon$. We construct an instance of PCST on this graph where one of the non-central vertices is the root, the central vertex has penalty $0$, and any other vertex has penalty $2(1+\frac{1}{n-1})$. 

When we run the \GW~algorithm on this instance, the center vertex dies instantly as it has $0$ coloring potential. Additionally, as $\frac{1}{n-1}<\epsilon$, each non-root leaf has coloring potential 
$$\frac{2(1+\frac{1}{n-1})}{\bta}=\frac{(1+\frac{1}{n-1})}{1+\epsilon}<1$$ and therefore dies before reaching the central vertex. So, the $\GW$ solution will pay penalty $(n-1)(2(1+\frac{1}{n-1}))=2n$. This is twice the cost of the optimal solution, which can be obtained by taking all $n$ edges of length $1$. The other solutions we consider will also have the same cost as the $\GW$ solution, as they will aim to connect only the $\Root$ and will pay the penalties for all the dead vertices. So, using any $\bta>2$ will lead to an approximation factor of at least $2$.

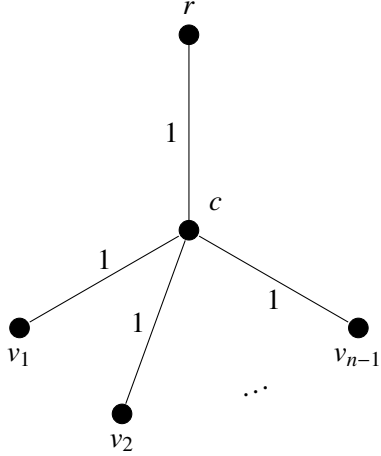
\begin{figure}
\begin{center}
\begin{tikzpicture}[scale=1.3]
\def\r{0.1}
\foreach \v/\deg/\d/\l/\dir in {r/90/2/r/90, c/0/0/c/45, v1/210/2/v_1/270, v2/250/2/v_2/270, v3/330/2/v_{n-1}/270} {
    \node[label={\dir:$\l$}] (\v) at (\deg:\d) {};
    \draw[fill=black] (\v) circle (\r);
};
\node[label={.}] at (287:2) {};
\node[label={.}] at (290:2) {};
\node[label={.}] at (293:2) {};
\draw (r) -- node[left]{$1$} (c);
\draw (c) -- node[above]{$1$} (v1);
\draw (c) -- node[left]{$1$} (v2);
\draw (c) -- node[below]{$1$} (v3);
\end{tikzpicture}
\end{center}
\caption{A star graph with $n+1$ vertices. We construct a PCST instance on this graph with vertex $r$ as the root, the central vertex $c$ having penalty $0$, and all other vertices with having penalty $2(1+\frac{1}{n-1})$.}
\label{fig:bta2}
\end{figure}

\section{Acknowledgements}

Partially supported by DARPA QuICC, ONR MURI 2024 award on Algorithms, Learning, and Game Theory, Army-Research Laboratory (ARL) grant W911NF2410052, NSF AF:Small grants 2218678, 2114269, 2347322.


\bibliographystyle{abbrv}
\bibliography{references}

\begin{thebibliography}{10}

\bibitem{AKRSTOC91}
A.~Agrawal, P.~Klein, and R.~Ravi.
\newblock When trees collide: An approximation algorithm for the generalized steiner problem on networks.
\newblock In {\em Proceedings of the Twenty-Third Annual ACM Symposium on Theory of Computing}, STOC '91, page 134–144, New York, NY, USA, 1991. Association for Computing Machinery.

\bibitem{DBLP:journals/siamcomp/AgrawalKR95}
A.~Agrawal, P.~N. Klein, and R.~Ravi.
\newblock When trees collide: An approximation algorithm for the generalized steiner problem on networks.
\newblock {\em {SIAM} J. Comput.}, 24(3):440--456, 1995.

\bibitem{DBLP:journals/corr/abs-2309-05172}
A.~Ahmadi, I.~Gholami, M.~Hajiaghayi, P.~Jabbarzade, and M.~Mahdavi.
\newblock 2-approximation for prize-collecting steiner forest.
\newblock In D.~P. Woodruff, editor, {\em Proceedings of the 2024 {ACM-SIAM} Symposium on Discrete Algorithms, {SODA} 2024, Alexandria, VA, USA, January 7-10, 2024}, pages 669--693. {SIAM}, 2024.

\bibitem{DBLP:books/daglib/0069809}
R.~K. Ahuja, T.~L. Magnanti, and J.~B. Orlin.
\newblock {\em Network flows - theory, algorithms and applications}.
\newblock Prentice Hall, 1993.

\bibitem{DBLP:journals/siamcomp/ArcherBHK11}
A.~Archer, M.~Bateni, M.~Hajiaghayi, and H.~J. Karloff.
\newblock Improved approximation algorithms for prize-collecting steiner tree and {TSP}.
\newblock {\em {SIAM} J. Comput.}, 40(2):309--332, 2011.

\bibitem{DBLP:conf/focs/ArcherBHK09}
A.~Archer, M.~Bateni, M.~T. Hajiaghayi, and H.~J. Karloff.
\newblock Improved approximation algorithms for {PRIZE-COLLECTING} {STEINER} {TREE} and {TSP}.
\newblock In {\em 50th Annual {IEEE} Symposium on Foundations of Computer Science, {FOCS} 2009, October 25-27, 2009, Atlanta, Georgia, {USA}}, pages 427--436. {IEEE} Computer Society, 2009.

\bibitem{Balas89}
E.~Balas.
\newblock The prize collecting traveling salesman problem.
\newblock {\em Networks}, 19(6):621--636, 1989.

\bibitem{DBLP:journals/ipl/BernP89}
M.~W. Bern and P.~E. Plassmann.
\newblock The steiner problem with edge lengths 1 and 2.
\newblock {\em Inf. Process. Lett.}, 32(4):171--176, 1989.

\bibitem{BienstockGSW93}
D.~Bienstock, M.~X. Goemans, D.~Simchi{-}Levi, and D.~P. Williamson.
\newblock A note on the prize collecting traveling salesman problem.
\newblock {\em Math. Program.}, 59:413--420, 1993.

\bibitem{DBLP:journals/corr/abs-2308-06254}
J.~Blauth, N.~Klein, and M.~N{\"{a}}gele.
\newblock A better-than-1.6-approximation for prize-collecting {TSP}.
\newblock {\em CoRR}, abs/2308.06254, 2023.

\bibitem{DBLP:conf/stoc/BlauthN23}
J.~Blauth and M.~N{\"{a}}gele.
\newblock An improved approximation guarantee for prize-collecting {TSP}.
\newblock In B.~Saha and R.~A. Servedio, editors, {\em Proceedings of the 55th Annual {ACM} Symposium on Theory of Computing, {STOC} 2023, Orlando, FL, USA, June 20-23, 2023}, pages 1848--1861. {ACM}, 2023.

\bibitem{DBLP:conf/stoc/ByrkaGRS10}
J.~Byrka, F.~Grandoni, T.~Rothvo{\ss}, and L.~Sanit{\`{a}}.
\newblock An improved lp-based approximation for steiner tree.
\newblock In L.~J. Schulman, editor, {\em Proceedings of the 42nd {ACM} Symposium on Theory of Computing, {STOC} 2010, Cambridge, Massachusetts, USA, 5-8 June 2010}, pages 583--592. {ACM}, 2010.

\bibitem{DBLP:journals/tcs/ChlebikC08}
M.~Chleb{\'{\i}}k and J.~Chleb{\'{\i}}kov{\'{a}}.
\newblock The steiner tree problem on graphs: Inapproximability results.
\newblock {\em Theor. Comput. Sci.}, 406(3):207--214, 2008.

\bibitem{DBLP:journals/corr/abs-0910-0553}
M.~X. Goemans.
\newblock Combining approximation algorithms for the prize-collecting {TSP}.
\newblock {\em CoRR}, abs/0910.0553, 2009.

\bibitem{DBLP:journals/siamcomp/GoemansW95}
M.~X. Goemans and D.~P. Williamson.
\newblock A general approximation technique for constrained forest problems.
\newblock {\em {SIAM} J. Comput.}, 24(2):296--317, 1995.

\bibitem{DBLP:conf/soda/GuptaKLRS07}
A.~Gupta, J.~K{\"{o}}nemann, S.~Leonardi, R.~Ravi, and G.~Sch{\"{a}}fer.
\newblock An efficient cost-sharing mechanism for the prize-collecting steiner forest problem.
\newblock In N.~Bansal, K.~Pruhs, and C.~Stein, editors, {\em Proceedings of the Eighteenth Annual {ACM-SIAM} Symposium on Discrete Algorithms, {SODA} 2007, New Orleans, Louisiana, USA, January 7-9, 2007}, pages 1153--1162. {SIAM}, 2007.

\bibitem{DBLP:conf/latin/HajiaghayiN10}
M.~Hajiaghayi and A.~A. Nasri.
\newblock Prize-collecting steiner networks via iterative rounding.
\newblock In A.~L{\'{o}}pez{-}Ortiz, editor, {\em {LATIN} 2010: Theoretical Informatics, 9th Latin American Symposium, Oaxaca, Mexico, April 19-23, 2010. Proceedings}, volume 6034 of {\em Lecture Notes in Computer Science}, pages 515--526. Springer, 2010.

\bibitem{DBLP:conf/soda/HajiaghayiJ06}
M.~T. Hajiaghayi and K.~Jain.
\newblock The prize-collecting generalized steiner tree problem via a new approach of primal-dual schema.
\newblock In {\em Proceedings of the Seventeenth Annual {ACM-SIAM} Symposium on Discrete Algorithms, {SODA} 2006, Miami, Florida, USA, January 22-26, 2006}, pages 631--640. {ACM} Press, 2006.

\bibitem{DBLP:journals/networks/Hakimi71}
S.~L. Hakimi.
\newblock Steiner's problem in graphs and its implications.
\newblock {\em Networks}, 1(2):113--133, 1971.

\bibitem{DBLP:conf/coco/Karp72}
R.~M. Karp.
\newblock Reducibility among combinatorial problems.
\newblock In R.~E. Miller and J.~W. Thatcher, editors, {\em Proceedings of a symposium on the Complexity of Computer Computations, held March 20-22, 1972, at the {IBM} Thomas J. Watson Research Center, Yorktown Heights, New York, {USA}}, The {IBM} Research Symposia Series, pages 85--103. Plenum Press, New York, 1972.

\bibitem{DBLP:journals/eccc/ECCC-TR95-030}
M.~Karpinski and A.~Zelikovsky.
\newblock New approximation algorithms for the steiner tree problems.
\newblock {\em Electron. Colloquium Comput. Complex.}, {TR95-030}, 1995.

\bibitem{DBLP:journals/acta/KouMB81}
L.~T. Kou, G.~Markowsky, and L.~Berman.
\newblock A fast algorithm for steiner trees.
\newblock {\em Acta Informatica}, 15:141--145, 1981.

\bibitem{DBLP:journals/siamdm/RobinsZ05}
G.~Robins and A.~Zelikovsky.
\newblock Tighter bounds for graph steiner tree approximation.
\newblock {\em {SIAM} J. Discret. Math.}, 19(1):122--134, 2005.

\bibitem{DBLP:journals/classification/Rohlf05}
F.~J. Rohlf.
\newblock J. felsenstein, inferring phylogenies, sinauer assoc., 2004, pp. xx + 664.
\newblock {\em J. Classif.}, 22(1):139--142, 2005.

\bibitem{DBLP:journals/algorithmica/Zelikovsky93}
A.~Zelikovsky.
\newblock An 11/6-approximation algorithm for the network steiner problem.
\newblock {\em Algorithmica}, 9(5):463--470, 1993.

\end{thebibliography}

\newpage 
\appendix

\section{Proofs of GW Algorithm}

\newtheorem*{recall:lemma:gwtree}{Lemma~\ref{lemma:gwtree}}
\begin{recall:lemma:gwtree}
    Let $\T$ be the tree returned by Algorithm \ref{alg:gw}. 
    We can bound the total weight of this tree by
    \begin{align*}
        \cc(\T)\leq 2\cdot\sum_{\substack{\SET \subset \V-\RSET;\\\SET\cap\V(\T)\neq\emptyset}}\ys=2\cdot\sum_{v \in \V(\T)-\RSET}\yv\text{.}
    \end{align*}
\end{recall:lemma:gwtree}
\begin{proof}
    First, we note that the $T$ is completely colored in our algorithm. Therefore, we have
    \begin{align*}
        \cc(\T)&=\sum_{\substack{\SET \subset \V}} \ys \cdot\left|\delta(\SET)\cap\T\right|\\
        &=\sum_{\substack{\SET \subset \V;\\\SET\cap\V(\T)\neq\emptyset}} \ys \cdot\left|\delta(\SET)\cap\T\right| \tag{$\left|\delta(\SET)\cap\T\right|=0$ if $\SET\cap\V(\T)=\emptyset$} \text{.}
    \end{align*}
    So we first want to prove the following:
    \begin{align*}
        \sum_{\substack{\SET \subset \V;\\\SET\cap\V(\T)\neq\emptyset}} \ys \cdot\left|\delta(\SET)\cap\T\right|
        \leq 2\cdot\sum_{\substack{\SET \subset \V-\RSET;\\\SET\cap\V(\T)\neq\emptyset}}\ys \text{.}
    \end{align*}
    We consider how each side of this inequality changes in one step of the coloring process. 
    Let $\CC$ be the set of components in this step. Take $H$ to be the tree obtained from $\T$ by contracting each component in $\CC$ to one vertex. For a set $\SET\in\CC$, let $\comp(\SET)$ denote the vertex corresponding to $\SET$ in $H$. We can see that 
    $\left|\delta(\SET)\cap\T\right|={deg}_H(\comp(\SET))$. 
    
    Let $\ah$ and $\ih$ respectively denote the set of active and inactive sets in $\CC$ that share a node of $\V(\T)$. Then each node in $H$ corresponds to a set in $\ah\cup\ih$. Now, assume that in this step, we are increasing the $\ys$ values for all active sets by $\Delta$. Then the increase in the left-hand side of the inequality is 
    $\Delta\cdot\sum_{\SET\in\ah}deg_H(\comp(\SET))$. 
    The increase in the right-hand side is equal to $2\Delta\cdot(\left|\ah\right|-1)$ as there are $\left|\ah\right|$ active sets intersecting $\V(\T)$ that increase their $\ys$ value by $\Delta$ and only one contains $\Root$. So it suffices to prove that $\sum_{\SET\in\ah}deg_H(\comp(\SET)) \leq 2(\left|\ah\right|-1)$.

    We argue that any leaf in $H$ must correspond to an active set in $\ah$. Assume otherwise that a leaf in $H$ is inactive in this step. So this leaf corresponds to a dead set $\SET$. As the corresponding vertex is a leaf, $\left|\delta(\SET)\cap\T\right|$ is equal to $1$. But in this case, $\SET$ would have been disconnected from $\Root$ in the second phase and this edge along with all edges inside $\SET$ would have been removed. So any leaf in $H$ must correspond to an active set and therefore vertices in $H$ not corresponding to active sets have a degree of at least $2$. Therefore, we have
    \begin{align*}
        \sum_{\SET\in\ah}deg_H(\comp(\SET)) &=  \sum_{v\in\V(H)}deg_H(v) - 
        \sum_{\SET\in\ih}deg_H(\comp(\SET))\\
        &\leq 2(\left|\V(H)\right|-1) - 
        \sum_{\SET\in\ih}deg_H(\comp(\SET)) \tag{$H$ is a tree}\\
        &\leq 2(\left|\V(H)\right|-1) - 
        2(\left|\ih\right|) \tag{vertices of $\ih$ have degree at least $2$}\\
        &\leq 2(\left|\V(H)\right|-1) - 
        2(\left|\V(H)\right|-\left|\ah\right|) \tag{$\left|\ih\right|=\left|\V(H)\right|-\left|\ah\right|$}\\
        &\leq 2(\left|\ah\right|-1) \text{.}
    \end{align*}
    This completes the first part of our proof.
    Next, we need to show that 
    \begin{align*}    
    \sum\limits_{\substack{\SET \subset \V;\Root\not\in\SET;\\\SET\cap\V(\T)\neq\emptyset}}\ys=\sum\limits_{v \in \V(\T);v\neq \Root}\yv.
    \end{align*}
    We expand the left-hand side:
    \begin{align*}
        \sum\limits_{\substack{\SET \subset \V-\RSET\\\SET\cap\V(\T)\neq\emptyset}}\ys=\sum\limits_{\substack{\SET \subset \V-\RSET\\\SET\cap\V(\T)\neq\emptyset}}\sum_{v\in\SET} \ysv\text{.}
    \end{align*}
    Now, for a set $\SET$ where $\SET\cap\V(\T)\neq\emptyset$, we claim that $\ysv$ can only be non-zero if $v\in\V(\T)$. Assume otherwise that $\ysv>0$ but $v\not\in\V(\T)$. This means that $v$ must have been disconnected from $\Root$ as part of a dead set. As $\ysv>0$, $v$ was not already dead when becoming part of $\SET$. But any active set containing $v$ afterward also includes all the vertices in $\SET$. So any dead set that could lead to $v$ being disconnected from $\Root$ also disconnects all the vertices in $\SET$. This goes against our assumption that $\SET\cap\V(\T)\neq\emptyset$, so if $\ysv>0$, $v$ must be in $\V(\T)$. Based on this, we can rewrite the previous equation as follows:
    \begin{align*}
        \sum\limits_{\substack{\SET \subset \V-\RSET;\\\SET\cap\V(\T)\neq\emptyset}}\ys&=\sum\limits_{\substack{\SET \subset \V-\RSET\\\SET\cap\V(\T)\neq\emptyset}}\sum_{v\in\SET} \ysv\\
        &=\sum\limits_{\substack{\SET \subset \V-\RSET\\\SET\cap\V(\T)\neq\emptyset}}\sum_{v\in\V(\T)\cap\SET} \ysv \tag{$\ysv=0$ if $v\not\in\V(\T)$}\\
        &=\sum_{v\in\V(\T)-\RSET}\sum\limits_{\substack{\SET \subset \V-\RSET\\v\in\SET}} \ysv \tag{Change the order of summation noting $\Root\not\in\SET$}\\
        &=\sum_{v\in\V(\T)-\RSET} \yv. \tag{Definition \ref{def:coldur}}
    \end{align*}
    In the last equation, we use the fact that if a set $\SET$ contains $\Root$, all the coloring for that set will be assigned to $\Root$ so $\ysv=0$ for all $v\neq\Root$. This completes our proof of the lemma.
\end{proof}

\newtheorem*{recall:lemma:gwdead}{Lemma~\ref{lemma:gwdead}}
\begin{recall:lemma:gwdead}
    For any vertex $v \in \V$, we have $\yv \leq \pi(v)$. 
    Furthermore, if $v \in \dead$ which means is a dead vertex, we have $\yv=\pi(v)$.
\end{recall:lemma:gwdead}
\begin{proof}
    By definition, $\yv$ is the amount of coloring potential of a vertex that is used and therefore $\yv\leq\pi(v)$ for all vertices. In addition, for a dead vertex $v$, its coloring potential is completely used so $\yv=\pi(v)$. 
\end{proof}

\newtheorem*{recall:lemma:gwlive}{Lemma~\ref{lemma:gwlive}}
\begin{recall:lemma:gwlive}
    Any vertex $v \not\in \V(\T)$ is a dead vertex.
\end{recall:lemma:gwlive}
\begin{proof}
    This is true by our construction. When we remove the single edge connecting a dead set to our tree, this only disconnects this dead set from the root. As no dead set contains live vertices, they all remain in $\V(\T)$. So any vertex outside $\V(\T)$ will be a dead vertex.
\end{proof}

\end{document}